\numberwithin{equation}{section}
\newcommand{\be}{\begin{equation}}
\newcommand{\ee}{\end{equation}}
\newcommand{\ext}[1]{\mathbf{d}\ensuremath{\mathbf{#1}}}
\newtheorem{theorem}{Theorem}[section] 
\newtheorem{lemma}{Lemma}[section]
\title{Bianchi models with a free massless scalar field: \\ 
	invariant sets and higher symmetries}
\date{\vspace{-5ex}}
\author{  
	Mikjel Thorsrud
	\\
	\small{\em Faculty of Engineering, \O stfold University College,}\\
	\small{\em P.O. Box 700, 1757 Halden, Norway.} \\
	\small{E-mail: mikjel.thorsrud@hiof.no }
}
\begin{document}

\maketitle

\begin{abstract}		
	We scrutinize the overall structure of the space of cosmological models of Bianchi type I-VII$_h$ that contain a free massless scalar field, with a spatially homogeneous gradient $\nabla_\mu \varphi$ that generally breaks isotropy, in addition to a standard perfect fluid. Specifically, state space is written as a union of disjoint invariant sets, each corresponding to a particular cosmological model that is classified with respect to the Bianchi type and the matter content. Subsets corresponding to higher-symmetry models, including all locally rotationally symmetric models and FLRW models, and models with a shear-free normal congruence, are also derived and classified. For each model the dimension ($d$) of the space of initial data is given, after fixing the orientation of the orthonormal frame uniquely relative to matter anisotropies and geometrical anisotropies. 
\end{abstract}

{\bf Keywords:} general relativity, orthonormal frame formalism, Bianchi cosmological models,  invariant sets, higher symmetries, classification, free scalar field, $p$-form gauge fields.

\newpage
\setcounter{tocdepth}{3}
\tableofcontents
\newpage

\section{Introduction}
In mathematical cosmology there is a vast literature on spatially homogeneous models beyond the Friedmann-Lema\^{i}tre-Robertson-Walker (FLRW) family of metrics. Efforts initiated by G.F.R. Ellis and M.A.H. MacCallum in the late 60's \cite{Ellis:1967, main:Ellis1969, MacCallum:1973} culminated in the orthonormal frame formalism, which has long been a standard approach to dynamical systems in cosmology \cite{bok:EllisWainwright}. 
The qualitative behaviour of Bianchi cosmological models containing perfect fluids is generally well understood \cite{Hsu-Wainwright:1986, Wainwright-Hsu:1989, Hewitt-Wainwright:1993}, even when the fluid four-velocity is tilted relative to the normal congruence $u^\mu$.\footnote{Spatially homogeneous cosmological models posses a unique future oriented timelike unit vector $u^\mu$, the \emph{normal congruence}, which is orthogonal to hypersurfaces of homogeneity. For a complete list of references to studies of \emph{tilted} perfect fluid models see table 18.4 of \cite{bok:EllisMaartensMacCallum}.} In non-tilted models the perfect fluid couples off from the shear evolution equations and thus only sources the expansion scalar.  The dynamical behavior of such models, which is still rich and highly non-trivial, can therefore be seen as a result of geometrical self-sourcing. In order to see general relativity's intimate relationship between geometry and matter in action in a cosmological model, however, the assumption of three-dimensional rotational invariance must be relaxed in the matter sector as well as in the spacetime geometry. In contrast to perfect fluid models, little is known about Bianchi models containing isotropy violating matter sources \cite{Calogero:2009, Hervik:2011}. To find natural candidates for such matter fields is not difficult; in fact a free massless scalar field will be equipped with an energy-momentum tensor that is generally imperfect when decomposed relative to the normal congruence $u^\mu$.

Scalar fields have a major role in physics, from high-energy physics models such as supergravity and string theories to the standard model of particle physics. In theoretical cosmology scalar fields play a major role in model building beyond the $\Lambda$CDM concordance model \cite{Copeland:2006}. The simplest cases are so-called quintessence models with a Lagrangian of the form 
\be
\mathcal L = -\frac{1}{2} g^{\mu\nu}\nabla_\mu \varphi \nabla_\nu \varphi - V(\varphi) \,.
\ee
A free and massless scalar field, $V(\varphi)\rightarrow 0$, belongs to a family of free gauge fields with action on the form
\begin{equation}
\label{action2}
S_{\mathcal A}=-\frac{1}{2}\int{\boldsymbol{\mathcal{F}}}\wedge \boldsymbol{\star\mathbf{\mathcal{F}}},\qquad \boldsymbol{\mathcal{F}} \equiv  \ext{\boldsymbol{\mathcal A}} \,,
\end{equation}
where $\boldsymbol{\mathcal A}$ is a $p$-form gauge field \cite{bok:supergravity}. In spacetime there are only three independent possibilities. The case $p=3$ corresponds to a cosmological constant, whereas $p=1$ corresponds to a Maxwell type field. The latter has received some attention in the past in the context of source-free cosmological magnetic fields \cite{LeBlanc:1997,Yamamoto:2011}. The remaining cases $p=0$ and $p=2$ are equivalent upon Hodge dual (at the field strength $p+1$ level), and correspond to a free massless scalar field with Lagrangian $\mathcal L = -\tfrac{1}{2}g^{\mu\nu}\nabla_\mu \varphi \nabla_\nu \varphi$  upon identifying $\nabla_\mu \varphi$ as the components of the field strength $\boldsymbol{\mathcal{F}}$. 

Among gauge fields in the family (\ref{action2}) only the case of a cosmological constant breaks the strong energy condition. A free and massless scalar field is of course not capable of driving an accelerated phase of expansion. If such a scalar field is homogeneous on spatial sections, $\varphi=\varphi(t)$, it merely behaves as a non-tilted ``stiff'' perfect fluid, which decays away rapidly with the expansion of the universe. It is thus easy to overlook the interesting role such a simple and natural matter field could have in cosmological model building in general and in the physics of the early universe in particular, where the phenomenological role of various isotropy violating fields have been devoted a great deal of attention (see \cite{Ford:1989, Ackerman:2007, Golovnev:2008, Watanabe:2009,Hervik:2011, Maleknejad:2012,Ito:2015,Almeida:2019,Cicciarella:2019} and references therein) often in order to address the so-called $\Lambda$CDM anomalies \cite{Bennett:2010}.  Here it is important to note that within spatially homogeneous cosmological models, in the absence of a potential function $V(\varphi)$, the only symmetry condition on the scalar field is that its gradient is spatially homogeneous. Unless $\nabla_\mu \varphi$ is parallel to the normal congruence $u^\mu$, such a scalar field breaks the isotropy of spatial sections. It will thus generally serve as an imperfect matter model, with a dynamical effective equation of state parameter ranging from $-1/3$ to $1$, that must be embedded in an anisotropic universe model.  

When $\nabla_\mu \varphi$ is orthogonal to $u^\mu$ the effective equation of state parameter is fixed to $-1/3$ and the scalar field evolve, informally, as an ``anisotropic spatial curvature field''. Such a field may be arranged to counter-balance the geometrical spatial curvature so that the anisotropies freeze out and become non-dynamical. Carneiro and Marug\'an presented such an exact shear-free solution in the locally rotationally symmetric (LRS) Bianchi type III metric \cite{main:Carneiro01}. This solution possess a shear-free normal congruence, so the expansion of the universe is isotropic, despite the intrinsic anisotropy of the underlying Bianchi model. Such solutions are dynamically equivalent to standard FLRW cosmological models and could have rather interesting phenomenological consequences \cite{main:Koivisto11}, that remain mainly unexplored. In \cite{Mik18} it was shown that among matter models of the family (\ref{action2}), only the case of a free and massless scalar field allows construction of cosmological models beyond FLRW with a shear-free normal congruence.  See start of \ref{sub:shear-free} for more details on studies of shear-free models. In \cite{Ben18} the orthonormal frame approach were adopted in order to study the general dynamics of cosmological models containing a non-tilted $\gamma$-law perfect fluid in addition to a $p$-form gauge field with $p\in\{0,2\}$.  A detailed dynamical system analysis was carried out for Bianchi type I and V, which revealed that the relevance of a free massless scalar field in a cosmological context is not limited to shear-free models. Especially in Bianchi type I a dynamically very rich behavior was identified, providing interesting features to be explored in the context of the early universe. Specifically, for each $\gamma$ in the range $(6/5, 2]$, which includes the radiation case $\gamma=4/3$, there was shown to be a unique non-LRS self-similar solution with a rotating field strength $\boldsymbol{\mathcal F}$ that is dynamically stable, in fact a global attractor of the type I state space. 

Overall, self-similar solutions found and explored in models with shear-free dynamics \cite{main:Carneiro01, main:Koivisto11, Mik18} and general dynamics \cite{Ben18}  motivate a more systematic study of this family of cosmological models. In principle we are interested in all Bianchi type I-IX models, which are the most general spatially homogeneous models that contain FLRW universes as special cases. However, in Bianchi type VIII and IX the equations for the scalar field allow only a temporal component of $\nabla_\mu \varphi$ \cite{Ben18}. Including these two cases would therefore not provide any information beyond what is already known from studies of Bianchi models with perfect fluids.

\newpage
In this paper the goal is to systematically investigate the overall structure of the space of cosmological models of Bianchi type I-VII$_h$ that contain two matter components with separately conserved energy-momentum tensors:  
\begin{enumerate}
	\item[1)] a free massless scalar field with a spatially homogeneous gradient $\nabla_\mu \varphi$ and
	\item [2)] a perfect fluid with a linear barotropic $\gamma$-law equation of state and with four-velocity orthogonal to hypersurfaces of homogeneity.  
\end{enumerate}
The main question we address is: \emph{What type of cosmological models are permitted by general relativity in this setup?} We address it by deriving all cosmological models subject to the aforementioned assumptions and classifying them with respect to the matter content, the Bianchi type and higher symmetries. Our starting point is the field equations expressed as a constrained autonomous system in expansion normalized variables. We shall use the standard 1+1+2 decomposition of Einstein's field equations admitted by Bianchi type I-VII$_h$ models \cite{bok:EllisWainwright,Coley-Hervik:2005}, first employed for this class of imperfect matter models in \cite{Ben18} but the orientation of the orthonormal frame will be fixed uniquely relative to matter anisotropies and geometrical anisotropies from the start, as established in section \ref{ch:model}. 

Our investigation of cosmological models is based on the identification of invariant sets of the constrained autonomous system. An invariant set is here understood as a subset of state space $D$ which is invariant under time evolution; i.e. if the state vector $X(\tau)$ belongs to a set $\mathcal B$ at a certain time $\tau_0$, then it belongs to $\mathcal B$ at all times $\tau$.\footnote{See \cite{bok:EllisWainwright} for an introduction to the concept of invariant sets in the context of dynamical systems.} Specifically, we write state space $D$ as a union of disjoint invariant sets, each of which defines a unique cosmological model of a given Bianchi type and with a given matter model. Each set belongs either to the perfect branch, where the matter sector is isotropic, or the imperfect branch, where $\nabla_\mu \varphi$ breaks the isotropy. Among the sets we derive are therefore all the familiar perfect-fluid cosmological models, which are well-known from the literature. This has the advantages that, first, it is easy to check that our results agrees with standard results in the isotropic limit and, second, it allows us to observe extensions of familiar cosmological models into the imperfect branch. 

For each invariant set we report the following information:
\begin{itemize}
	\item The defining conditions for each cosmological model. These conditions are general and gives all representations in state space compatible with our choice of orthonormal frame established in section \ref{ch:model}. 
	\item The Bianchi type (I-VII$_h$), subsets with higher symmetries (LRS and FLRW models) and expansion symmetries (shear-free models). In LRS models the direction of the symmetry axis is given explicitly. 
	\item The matter model (perfect branch, imperfect branch, direction of energy flux).
	\item The dimension $d$ of the space of initial data. This is the number of independent constants that must be specified on the Cauchy surface, which is a measure of the generality of each model. 
\end{itemize}

For a summary of main findings, see table \ref{tab:sets:C} and \ref{tab:sets:D} for general sets and table \ref{tab:sets:LRS} for particular sets with higher symmetries. 

\paragraph{Organization of paper}
The paper is organized in the following way. In section \ref{ch:model} we fix the orientation of the orthonormal frame and present the underlying constrained autonomous system that is the starting point of the investigation. In section \ref{ch:invariant} state space is written as a union of disjoint invariant sets, each of which is identified as a particular cosmological model. In section \ref{ch:higher} all higher-symmetry subsets are derived that correspond to LRS and FLRW models. In section \ref{ch:ex-sym} subsets with expansion symmetries are identified, including all those with a shear-free normal congruence. Finally, section \ref{ch:summary} gives a summary with concluding remarks. The first part of this last section may be useful for readers not familiar with the concept of gauge fixing in the context of the orthonormal frame formalism. 

\paragraph{Conventions} Greek indices ($\alpha$, $\beta$, $\dots$) run from 0 to 3, Latin indices ($a$, $b$, $\dots$) from 1 to 3. Units satisfy $c=1$ and $8\pi G=1$.

\section{Model \label{ch:model}}
In this section we introduce the cosmological model and establish the constrained autonomous system which is presented in section \ref{sub:evo}. This dynamical system will be the starting point of our investigation of cosmological models. Our approach starts with the standard 1+1+2 decomposition of Einstein's field equations admitted by Bianchi type I-VII$_h$ models \cite{bok:EllisWainwright,Coley-Hervik:2005,Ben18}, but we fix the orientation of the orthonormal frame uniquely relative to matter anisotropies and geometrical anisotropies.

\subsection{Geometry \label{sub:geometry}}
In Bianchi cosmological models the metric admits a three-dimensional group $G_3$ of isometries that acts simply transitively on hypersurfaces of homogeneity \cite{bok:EllisWainwright}. Conveniently, the Bianchi models that are relevant for us to explore, are exactly those that admit a two-dimensional Abelian subgroup $G_2$ of isometries, which is Bianchi type I-VII$_h$ models.\footnote{As mentioned in the introduction, in Bianchi type VIII and IX the field equations for the scalar field restrict  $\nabla_\mu \varphi$ to be parallel to $u^\mu$, in which case the scalar field is dynamically trivial and equivalent to a stiff perfect fluid comoving with the normal congruence.} Adopting the orthonormal frame approach, the metric can be written 
\begin{equation}
\boldsymbol{{\rm ds}^2} =-\boldsymbol{\rm d}t^2+\delta_{ab}\,\boldsymbol{\omega}^a\boldsymbol{\omega}^b \,,
\label{line-element}
\end{equation}
where $\{\boldsymbol{\omega}^\mu\}=\{\boldsymbol{\rm d}t,  \boldsymbol{\omega}^a\}$ is an orthonormal frame (tetrad) obeying
\begin{equation} 
\ext{\boldsymbol{\omega}}^a=-\frac{1}{2}\tensor{\gamma}{^a_{bc}}(t)\boldsymbol{\omega}^{b}\wedge\boldsymbol{\omega}^c-\gamma^a_{~0c}(t)  \boldsymbol{{\rm d}}t\wedge\boldsymbol{\omega}^c \,,
\label{dw}
\end{equation}
which is dual to $\{\boldsymbol{e}_\mu\} = \{ \frac{\partial}{\partial t}, \mathbf e_a \}$. Here the timelike basis vector $\mathbf e_0$ is chosen orthogonal to spatial sections of homogeneity ($t=$ constant). 

We define the Hubble parameter $H$ and rate of shear tensor $\sigma_{\mu\nu}$ relative to the normal congruence $u_\mu=-\partial_\mu t$. Recall that the shear tensor is symmetric $\sigma_{\mu\nu}=\sigma_{(\mu\nu)}$, trace-free $\sigma^\mu_{\;\;\mu}=0$ and satisfies $u^\mu\sigma_{\mu\nu}=0$ so that only the spatial components $\sigma_{ab}$ are non-zero in the considered frame.  Since the vorticity and acceleration of the normal congruence is always zero \cite{main:Ellis1969}, $H$ and $\sigma_{\mu\nu}$ are given by
\be
\nabla_a u_b = H \delta_{ab} + \sigma_{ab} \,.
\ee
The structure  coefficients depend only on time and the spatial components are decomposed as 
\begin{eqnarray}
\label{strcoeff}
\tensor{\gamma}{^m_{ab}}&=\tensor{\epsilon}{_{abn}}n^{nm}+a_a\tensor{\delta}{_b^m}-a_b\tensor{\delta}{_a^m}\,,
\end{eqnarray}  
where $a_i$ is a covector, $n^{ab}$ is a symmetric tensor density and $\epsilon_{abc}$ is the Levi-Civita symbol ($\epsilon_{123}=1$). The evolution equations are \cite{bok:EllisWainwright}
\begin{align}
\dot a_i &= - H a_i - \sigma_i^{\;\;j} a_j + \epsilon_i^{\;\;jk}a_j\Omega_k \,, \label{eq:a} \\
\dot n_{ij} &= - H n_{ij} + 2\sigma_{(i}^{\;\;\; k} n_{j)k} + 2\epsilon^{mn}_{\;\;\;\;\; (i} n_{j)m} \Omega_n \,, \label{eq:n}
\end{align}
where the ``dot'' is a derivative with respect to proper time $t$ and $\Omega_i$ is the local angular velocity of the spatial frame $\{\mathbf e_a\}$ relative to a Fermi-propagated spatial frame. 

Following \cite{bok:EllisWainwright} we now perform a 1+2 decomposition of the spatial structure coefficients. First choose an orientation of the spatial frame $\{\mathbf e_a\}$ so that 
\be
a_i=(a,0,0) \,.
\label{def:a}
\ee
From (\ref{eq:a}) it follows that the frame must rotate in the following way:
\begin{align}
\Omega_2 &= \sigma_{13} \, , \label{Omega2}\\  
\Omega_3 &= -\sigma_{12} \, . 
\label{Omega3}
\end{align}
This still leaves an unused gauge degree of freedom associated with rotation around $\mathbf e_1$ that will be fixed relative to the imperfect matter sector in the next subsection.

Since $a_i$ is in the Kernel of $n^{ab}$ the gauge choice (\ref{def:a}) implies $n^{1a}=0$ in class B models ($a\neq 0$). This choice is also possible in those Bianchi class A models ($a=0$) that admits an Abelian subgroup $G_2$ of isometries. It is easy to check that the condition $n^{1a}=0$ is preserved in time from the evolution equation (\ref{eq:n}) given the gauge choice (\ref{Omega2})-(\ref{Omega3}). The choice $a_i=(a,0,0)$ and $n^{1i}=0$ thus exclude only Bianchi type VIII and IX models.  The result is a 1+1+2 decomposition with $\mathbf e_2$ and $\mathbf e_3$ tangent to the orbits of the $G_2$ subgroup.

\subsection{Matter fields \label{sub:matter}}
In the matter sector we consider a free massless scalar field with a spatially homogeneous, generally anisotropic, gradient $\nabla_\mu \varphi$ and a $\gamma$-law perfect fluid. Only the perfect fluid is assumed to be comoving with the normal congruence $u^\mu$.

\paragraph{Free massless scalar field}
In the language of differential forms the action of a free and massless scalar field is
\begin{equation}
\label{action}
S_{\varphi}=-\frac{1}{2} \boldsymbol{\int \ext\varphi \wedge \boldsymbol{\star}(\ext\varphi)} \,.
\end{equation}
The Maxwell type equations $\ext{\boldsymbol{\ext\varphi}}=0$ and $\ext{\boldsymbol{\star \ext\varphi}}=0$ follow, the first one as an identity. We next write the components of the gradient as
\be
\boldsymbol{ \ext\varphi} = x_\mu(t) \boldsymbol{\omega}^\mu \,, \quad x_\mu\equiv \nabla_\mu \varphi \,.
\ee
Our only assumption is that $x_\mu$ are functions of time $t$ only. We shall often refer to $x_\mu$ as the \emph{field strength one-form}, since it is the components of the field strength ($\boldsymbol{\mathcal F}$) in the gauge field description (\ref{action2}). The ``Maxwell equations'' can then be written on component form, relative to the orthonormal frame, as 
\begin{align} 
\partial_\mu x_\nu - \partial_\nu x_\mu  + \gamma^\gamma_{\;\;\nu\mu}x_\gamma = 0 \,, \label{eqm:dF} \\   \partial_\alpha x^\alpha  + x^\beta\gamma^\alpha_{\;\;\beta\alpha} = 0 \,, \label{eqm:dsF}
\end{align} 
where $\partial_\mu$ denotes the directional derivative along the basis vector $\mathbf e_\mu$. The first equation follows since $x_\mu$ is an exact 1-form by definition (so it must be closed as well), whereas the second equation is the usual Klein-Gordon equation for the scalar field. 

In order to fix the remaining gauge degree of freedom we write down the $(\mu, \nu)=(0, i)$ components of (\ref{eqm:dF}) explicitly:
\be
\dot x_i = - H x_i - \sigma_i^{\;\;j} x_j + \epsilon_i^{\;\;jk}a_j\Omega_k \,.
\label{eq:x}
\ee
Note that the form of this equation is identical to the form of the evolution equation (\ref{eq:a}) for $a_i$. It follows that the frame can be aligned with field strength one-form so that the conditions $x_2=x_3=0$ are preserved in time with the same frame rotation as given in (\ref{Omega2}) and (\ref{Omega3}). Generally $a_i$ and $x_i$ are not aligned though, but the gauge choice $a_i=(a,0,0)$ allows for choosing an orientation of the frame so that $x_i=(x_1,0,x_3)$ at any given instant of time. According to (\ref{eq:x}) the condition $x_2=0$ is preserved in time if 
\be
\Omega_1 = \sigma_2^{\;\;3} \,.
\ee  
The gauge fixing will be summarized and completed in the following subsection.

\paragraph{Perfect fluid}
Beside the scalar field we also consider a perfect fluid  obeying a barotropic linear $\gamma$-law equation of state
\begin{equation}
\label{eospf}
P=(\gamma-1)\rho\,, 
\end{equation} 
where $\gamma\,\in[0,2]$ is a constant. The four-velocity of the fluid is identified with the normal congruence $u^\mu$ by the following choice of energy-momentum tensor:
\be
T^{\mu\nu} = \rho u^\mu u^\nu + P (g^{\mu\nu}+u^\mu u^\nu) \,.
\ee
Hence, the Hubble parameter $H$ and shear tensor $\sigma_{\mu\nu}$ measure the rate of expansion and rate of shear of the perfect fluid. Since the perfect fluid is comoving with the normal congruence, these are the only kinematically relevant quantities. The associated evolution equations are:
\begin{align}
\dot H &= -H^2 -\frac{1}{3} \sigma_{ab} \sigma^{ab} -\frac{1}{3} x_0^2 + \frac{1}{6} (2-3\gamma) \rho \,, \label{eq:H-ev} \\
\dot \sigma_{ab} &= -3H\sigma_{ab} + 2 \epsilon^{ij}_{\;\;\; (a} \sigma^{}_{b)i} \Omega_j - {^3} S_{ab} + \pi_{ab} \,, \label{eq:shear-ev}
\end{align}
where ${^3} S_{ab}$ is the three-dimensional traceless Ricci tensor and $\pi_{ab}$ is the anisotropic stress tensor.

\subsection{Gauge-fixing \label{sub:gauge}}
We now switch to expansion normalized variables defined in appendix \ref{app:normalized}. In these variables the gauge fixing conditions introduced above can be summarized by 
\begin{align}
R_i &\equiv \frac{\Omega_i}{H} = \sqrt{3} ( \Sigma_\times,  \Sigma_3, - \Sigma_2 ) \,, \label{gauge:R} \\
A_i &= (A,0,0) \,, \label{gauge:A} \\
X_i &= (V_1, 0, V_3) \,. \label{gauge:V}
\end{align}  
The goal of this section is to show that one can further set
\be
\Sigma_2=0 \,,
\label{gauge:S2}
\ee
without loss of generality. In order to prove this we need the evolution equation for $\Sigma_2$:
\be
\Sigma_2' = \Sigma_2 (q-2-3\Sigma_+-\sqrt{3}\Sigma_-) \,, \label{S2}
\ee
where $q$ is the deceleration parameter defined in Appendix \ref{app:normalized} and $'$ denotes differentiation with respect to dimensionless time defined below in (\ref{time:dimensionless}). It follows from (\ref{S2}) that the conditions $\Sigma_2>0$, $\Sigma_2=0$ and $\Sigma_2<0$ define invariant sets. In particular, the condition $\Sigma_2=0$ is preserved in time. However, since the conditions (\ref{gauge:R})-(\ref{gauge:V}) fix the gauge completely in class B models with $V_3\neq 0$, we cannot set $\Sigma_2=0$ as an initial condition simply by a choice of frame. The argument relies on some constraint equations. We need the $(0,2)$ component of the Einstein equation \cite{bok:EllisWainwright}: 
\be
0=N_+ \Sigma_3 + 3A\Sigma_2 + \sqrt{3} \Sigma_2 N_\times - \sqrt{3} \Sigma_3 N_- \,.
\label{equation:0i}
\ee
We also need the $(\mu, \nu)=(i, j)$ components of the ``Maxwell'' equation (\ref{eqm:dF}). This gives two equations, which are expressed in expansion normalized form in equations (\ref{C1}) and (\ref{C2}) in the subsection below.\footnote{In section \ref{sub:evo} the gauge condition (\ref{gauge:S2}) is already implemented, but there is no appearance of shear variables in the space-space components of equation (\ref{eqm:dF}).} There are two ways to satisfy (\ref{C1}) and (\ref{C2}). The first one is by the condition $V_3=0$. In that case $V_i$ is parallel to $A_i$ and we have an unused rotation around the frame vector $\mathbf e_1$ that can be used to set the initial condition $\Sigma_2=0$. The second one is by the condition $(A,\, N_+) = \sqrt{3}(N_\times, \, N_-)$. In that case constraint equation (\ref{equation:0i}) directly gives $\Sigma_2=0$ or $N_\times=0$. If $N_\times = 0$ it follows that $A=0$ and that the only non-vanishing component of the matrix $N_{ab}$ is $N_{22}=2\sqrt{3}N_-$. This corresponds to Bianchi type II if $N_-\neq 0$ and Bianchi type I if $N_- = 0$. In this case $N_{ab}$ is invariant under a rotation of the frame with respect to the frame vector $\mathbf e_2$. The gauge condition $V_2=0$ is also invariant under the same rotation. Since the object $[\Sigma_2, \Sigma_\times]^T$ transforms as a spin-1 object under a frame rotation with respect to $\mathbf e_2$, we have freedom to choose the initial condition $\Sigma_2=0$, on top of the gauge conditions given by (\ref{gauge:R})-(\ref{gauge:V}).

To conclude, we have proved that, on top of the gauge fixing conditions (\ref{gauge:R})-(\ref{gauge:V}), the condition $\Sigma_2=0$ either follows directly from constraint equations (as in class B models with $V_3\neq0$) or can be chosen by an appropriate initial orientation of the frame. This is equivalent to the following statement:
\begin{lemma}
	In the gauge defined by (\ref{gauge:R})-(\ref{gauge:A}) the objects $[V_2, V_3]^T$ and $[\Sigma_2, \Sigma_3]^T$ can be chosen parallel (as column vectors in $\mathbb{R}^2$) without restricting the space of physical models. 
	\label{Lemma:gauge}
\end{lemma}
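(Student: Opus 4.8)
The plan is to recast the geometric statement ``$[V_2,V_3]^T$ is parallel to $[\Sigma_2,\Sigma_3]^T$'' as the vanishing of the determinant $W \equiv V_2\Sigma_3 - V_3\Sigma_2$, and then to produce a frame in which $W=0$. The first thing I would record is that after (\ref{gauge:R})-(\ref{gauge:A}) the always-available residual freedom is a rigid rotation about $\mathbf e_1$, and that under it the pair $[V_2,V_3]^T$ (the $2$-$3$ part of a vector) and the pair $[\Sigma_2,\Sigma_3]^T$ (proportional to $[\Sigma_{12},\Sigma_{13}]$, the $2$-$3$ part of a symmetric tensor with one index locked to $\mathbf e_1$) rotate by one and the same $SO(2)$ matrix. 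Hence $W$ is invariant under this freedom, so within it parallelism is frame-independent; the remaining job is to show that the constraints force $W=0$ generically, and that extra gauge freedom supplies $W=0$ in the degenerate cases.

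Next I would split according to whether $[V_2,V_3]^T$ vanishes. If $V_2=V_3=0$ the zero vector is parallel to everything and $W=0$ trivially. Otherwise $[V_2,V_3]^T\neq0$ and I use the rotation about $\mathbf e_1$ to align it with $\mathbf e_3$, i.e. to impose the supplementary condition $V_2=0$ of (\ref{gauge:V}) with $V_3\neq0$. In this frame $W=-V_3\Sigma_2$, so parallelism reduces to the single scalar equation $\Sigma_2=0$.

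The core step is then to establish $\Sigma_2=0$ in this frame. I would feed $V_2=0$, $V_3\neq0$ into the space-space components of $\mathbf{d}(\mathbf{d}\varphi)=0$, namely (\ref{C1})-(\ref{C2}); with $V_3\neq0$ these force $(A,N_+)=\sqrt{3}(N_\times,N_-)$. Substituting into the $(0,2)$ Einstein constraint (\ref{equation:0i}) collapses it to $N_\times\Sigma_2=0$, so either $\Sigma_2=0$ (and we are done, with (\ref{S2}) guaranteeing that $\Sigma_2=0$ is preserved in time) or $N_\times=0$. In the latter sub-case $A=\sqrt3N_\times=0$ and $N_{22}=2\sqrt3N_-$ is the sole surviving entry of $N_{ab}$ (Bianchi type I or II); since $A=0$ the condition (\ref{gauge:A}) is now preserved by \emph{any} rotation, and a rotation about $\mathbf e_2$ fixes both $N_{ab}$ and $V_2=0$ while acting as a spin-$1$ rotation on $[\Sigma_2,\Sigma_\times]^T$, so I can rotate $\Sigma_2$ to zero. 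In every branch we reach a frame with $\Sigma_2=0$, hence $W=0$.

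The step I expect to be the real obstacle is the degenerate sub-case $N_\times=0$: there the constraints do \emph{not} force $W=0$, and the lemma survives only because an enlarged rotational symmetry (about $\mathbf e_2$) emerges exactly when the Bianchi type degenerates to I or II. The delicate checks are that this rotation preserves every previously imposed gauge condition — in particular (\ref{gauge:R}), which ties the frame angular velocity $\Omega_i$ to the shear — and that $[\Sigma_2,\Sigma_\times]^T$ transforms irreducibly under it, so that $\Sigma_2$ can indeed be annihilated. Deriving the implication (\ref{C1})-(\ref{C2}) $\Rightarrow (A,N_+)=\sqrt3(N_\times,N_-)$ is the other computation I would carry out carefully, though I expect it to be routine once the Maxwell constraints are written in expansion-normalized form.
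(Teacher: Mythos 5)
Your proposal is correct and follows essentially the same route as the paper's own argument: it uses the Maxwell constraints (\ref{C1})--(\ref{C2}) to force either the degenerate case or $(A,N_+)=\sqrt{3}(N_\times,N_-)$, then collapses the $(0,2)$ Einstein constraint (\ref{equation:0i}) to $N_\times\Sigma_2=0$, and resolves the residual case $N_\times=0$ (Bianchi type I/II) by exploiting the emergent rotation freedom about $\mathbf e_2$ acting on the spin-1 object $[\Sigma_2,\Sigma_\times]^T$, with (\ref{S2}) guaranteeing preservation in time. The only cosmetic differences are your packaging of parallelism as the vanishing of the determinant $W=V_2\Sigma_3-V_3\Sigma_2$ and your trivial handling of the case $V_2=V_3=0$ (where the paper instead uses the unused $\mathbf e_1$-rotation to set $\Sigma_2=0$, which it needs for the subsequent simultaneous gauge choice $V_2=\Sigma_2=0$ but which the lemma's parallelism statement does not require).
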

\begin{proof}
	Arguments above in conjuncture with the transformation properties in appendix \ref{app:transformations}.
\end{proof}
We have chosen the particular orientation of the orthonormal frame in which $V_2$ and $\Sigma_2$ vanish simultaneously. This fixes the gauge uniquely, for the most general Bianchi models, up to some discrete transformations discussed in Appendix \ref{app:discrete}.

\subsection{Evolution and constraint equations \label{sub:evo}} 
Here we write down the constrained autonomous system that corresponds to the Einstein and matter equations in expansion normalized variables. Later this set of evolution equations and constraint equations, will be referred to, collectively, as the \emph{dynamical system}. The system assumes the gauge fixing conditions (\ref{gauge:R})-(\ref{gauge:S2}) developed above and agree with the equations presented in section 5 of \cite{Ben18} by specializing to this gauge. In the following two subsections some results regarding orthogonal models and diagonalizability, that follow directly, are presented.

\paragraph{Evolution equations} 
\begin{align}
\Omega' &= \Omega \left( 2q+2-3\gamma \right) \,, \label{O} \\
\Theta' &= \Theta (q-2) -2AV_1 \,, \label{T} \\ 
V_1' &= V_1(q+2\Sigma_+)-2\sqrt{3} \Sigma_3 V_3 \,, \label{V1} \\
V_3' &= V_3(q-\Sigma_+ + \sqrt{3} \Sigma_- ) \,, \label{V3} \\
\Sigma_+' &= \Sigma_+ (q-2)+3\Sigma_3^2 - 2 (N_-^2+N_\times^2) + V_3^2 -2V_1^2 \,, \label{Sp} \\
\Sigma_-' &= \Sigma_- (q-2)+\sqrt{3}( 2\Sigma_\times^2 - \Sigma_3^2) + 2 A N_\times - 2N_-N_+ - \sqrt{3} V_3^2 \,, \label{Sm} \\
\Sigma_\times' &= \Sigma_\times (q-2-2\sqrt{3} \Sigma_-) - 2AN_- - 2 N_+ N_\times \,, \label{Sc} \\
\Sigma_3' &= \Sigma_3 (q-2-3\Sigma_++\sqrt{3}\Sigma_-)+2\sqrt{3}V_1V_3 \,, \label{S3}  \\
A' &= A (q+2\Sigma_+) \,, \label{A} \\ 
N_+' &= N_+ (q+2\Sigma_+) + 6\Sigma_-N_- + 6 \Sigma_\times N_\times \,, \\
N_-' &= N_- (q+2\Sigma_+) + 2\sqrt{3}\Sigma_\times N_\times + 2 \Sigma_- N_+ \,, \\
N_\times' &= N_\times (q+2\Sigma_+) - 2\sqrt{3}\Sigma_\times N_- + 2 \Sigma_\times N_+ \,, \label{Nx} 
\end{align}
where 
\be
q = \left(\tfrac{3}{2}\gamma -1\right)\Omega + 2\Theta^2 + 2\Sigma^2  , \quad \Sigma^2 \equiv \Sigma_+^2 +\Sigma_-^2 +\Sigma_\times^2 + \Sigma_3^2 \,.
\label{def:sigma}
\ee
Here $'$ denotes a derivative with respect to the dimensionless time $\tau$ that is defined via the Hubble scalar: 
\be
H = \frac{d\tau}{dt} \,.
\label{time:dimensionless}
\ee
The evolution equation (\ref{T}) for the temporal field strength component $\Theta$ follows from (\ref{eqm:dsF}), whereas the evolution equations (\ref{V1}) and (\ref{V3}) for the spatial components $V_1$ and $V_3$ is (\ref{eq:x}) rewritten in expansion normalized variables. The shear evolution equations (\ref{Sp})-(\ref{S3}) are the trace-free space-space components of the Einstein equation, and the evolution equations (\ref{A})-(\ref{Nx}) for the geometrical variables follow from the Jacobi identities for the frame basis vectors; see \cite{Ben18} and \cite{bok:EllisWainwright} for details.

\paragraph{Constraint equations} The variables are subject to multiple constraints: 

\begin{align}
C_1 &=0, \quad C_1 \equiv V_3 (A-\sqrt{3}N_\times) \,, \label{C1} \\
C_2 &=0, \quad C_2 \equiv V_3 (N_+-\sqrt{3}N_-) \,, \label{C2} \\
C_3 &=0, \quad C_3 \equiv \Sigma_+ A + \Sigma_- N_\times - \Sigma_\times N_- - \Theta V_1 \,, \label{C3} \\
C_4 &=0, \quad C_4 \equiv \Sigma_3(N_+  - \sqrt{3} N_-) \,, \label{C4} \\
C_5 &=0, \quad C_5 \equiv  \Sigma_3 (N_\times - \sqrt{3}A )  - 2 \Theta V_3 \,, \label{C5} \\
C_6 &=0, \quad C_6 \equiv -1+\Omega + \Theta^2 + V_1^2 + V_3^2  + \Sigma^2 + A^2  + N_-^2 + N_\times^2 \,. \label{C6} 
\end{align}
Constraints (\ref{C1})-(\ref{C2}) follow from the space-space components of the ``Maxwell equation'' (\ref{eqm:dF}). The other constraints follow from the Einstein equation. Specifically, (\ref{C3}), (\ref{C4}) and(\ref{C5}) correspond to (0,1), (0,2) and (0,3) equations of the Einstein equation, respectively. 

\subsubsection{Orthogonal models}
By assumption the perfect fluid is \emph{non-tilted}, in the sense that its four-velocity is identified with the normal congruence $u^\mu$. For the scalar field, on the other hand, our only physical assumption is homogeneity of the gradient: $X_\mu=X_\mu(t)$. Analogously, we say that the scalar field is \emph{non-tilted} in the case that its energy flux vanishes relative to the normal congruence. This is only the case if $\Theta V_i=0$ for $i=1,2,3$, see \cite{Mik18} for the covariant decomposition of the energy-momentum tensor explicitly written down. Otherwise, we say that the scalar field is \emph{tilted}. A cosmological model is referred to as \emph{orthogonal} if \emph{all} matter types are non-tilted. This requires that the conditions  $\Theta V_1=0$ and $\Theta V_3=0$ are preserved in time. By inspecting the evolution equations we find three types of orthogonal models containing the scalar field:

\begin{enumerate}
	\item All models with $\nabla_\mu \varphi$ parallel to $u^\mu$:
	\be
	\Theta^2>0\,, \quad V_1=V_3=0 \,.
	\ee
	\item Class A models with $\nabla_\mu \varphi$ orthogonal to $u^\mu$: \newline
	\be
	\Theta = 0\,, \quad V_1^2+V_3^2>0\,, \quad A=0 \,.
	\ee 
	\item Class B models with: \newline
	\be
	\Theta = V_1=0\,, \quad V_3^2>0 \,, \quad \Sigma_3=0\,, \quad A>0 \,.
	\ee 
\end{enumerate}

\subsubsection{Diagonalizability}
It is well-known that $\Sigma_{ij}$ and $N_{ij}$ are simultaneously diagonalizable in spacetimes of Bianchi class A containing a non-tilted perfect fluid \cite{main:Ellis1969,Wainwright-Hsu:1989}. It is interesting to see how this result transfers to the considered model defined by the evolution equations and constraint equations above. We find the result to hold even in the presence of an anisotropic field strength one-form given that the model is orthogonal:  
\begin{lemma}
	$N_{ij}$ and $\Sigma_{ij}$ are simultaneously diagonalizable in orthogonal models of Bianchi class A (Bianchi type II, VI$_0$ and VII$_0$).  
	\label{Lemma:dia}
\end{lemma}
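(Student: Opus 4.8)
The plan is to reduce the statement to a commutator identity. Two real symmetric matrices are simultaneously diagonalizable by an orthogonal spatial frame rotation if and only if they commute, so it suffices to show that $\Sigma_{ij}N_{jk}=N_{ij}\Sigma_{jk}$ on the orthogonal class A sector. Everything then comes down to evaluating one commutator and discharging it with the constraint equations of Section \ref{sub:evo}.

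First I would write both matrices explicitly in the gauge fixed in Section \ref{sub:gauge}. Using the frame-rotation identities (\ref{gauge:R}) together with the gauge condition $\Sigma_2=0$ of (\ref{gauge:S2}), the shear reads
\[
\Sigma_{ij}=\begin{pmatrix} -2\Sigma_+ & 0 & \sqrt{3}\,\Sigma_3 \\ 0 & \Sigma_++\sqrt{3}\,\Sigma_- & \sqrt{3}\,\Sigma_\times \\ \sqrt{3}\,\Sigma_3 & \sqrt{3}\,\Sigma_\times & \Sigma_+-\sqrt{3}\,\Sigma_- \end{pmatrix},
\]
while in class A ($A=0$) the condition $n^{1a}=0$ makes the curvature matrix block-diagonal,
\[
N_{ij}=\begin{pmatrix} 0 & 0 & 0 \\ 0 & N_++\sqrt{3}\,N_- & \sqrt{3}\,N_\times \\ 0 & \sqrt{3}\,N_\times & N_+-\sqrt{3}\,N_- \end{pmatrix}.
\]
Next I would compute the commutator $[\Sigma,N]=\Sigma N-N\Sigma$. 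Because $N$ annihilates the $\mathbf e_1$ direction, $[\Sigma,N]$ is antisymmetric with only two nontrivial blocks: an off-diagonal piece coupling $\mathbf e_1$ to $(\mathbf e_2,\mathbf e_3)$, and the commutator of the two lower $2\times2$ blocks. Their entries reduce to exactly three scalars, proportional respectively to $\Sigma_3 N_\times$, to $\Sigma_3(N_+-\sqrt{3}\,N_-)$, and to $\Sigma_-N_\times-\Sigma_\times N_-$, so $[\Sigma,N]=0$ if and only if these three vanish.

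Finally I would discharge the three scalars using the constraints at $A=0$ together with the orthogonality hypothesis $\Theta V_1=\Theta V_3=0$. The combination $\Sigma_3(N_+-\sqrt{3}\,N_-)$ is precisely $C_4$ of (\ref{C4}), hence vanishes identically. Constraint $C_5$ of (\ref{C5}) reduces, at $A=0$, to $\Sigma_3 N_\times=2\Theta V_3$, which vanishes by orthogonality; and constraint $C_3$ of (\ref{C3}) reduces, at $A=0$, to $\Sigma_-N_\times-\Sigma_\times N_-=\Theta V_1$, which again vanishes by orthogonality. Thus $[\Sigma,N]=0$ and the two matrices share a common orthonormal eigenbasis.

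The subtlety I expect to be the crux is that commutativity is \emph{not} a consequence of the constraints alone: for a tilted scalar field, where $\Theta V_i\neq0$, constraints $C_3$ and $C_5$ force $\Sigma_-N_\times-\Sigma_\times N_-$ and $\Sigma_3 N_\times$ to be nonzero, so $\Sigma_{ij}$ and $N_{ij}$ generically fail to commute. The orthogonality assumption is therefore indispensable and enters precisely through these two constraints; the only real care needed is in matching each off-diagonal term of the commutator to the constraint that annihilates it, rather than in any heavy computation. Simultaneous diagonalizability then persists throughout the evolution, since both the constraints and the conditions $\Theta V_i=0$ are preserved by the flow.
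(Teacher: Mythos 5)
Your proof is correct, and it takes a genuinely different route from the paper's. You reduce the lemma to the standard spectral criterion that two real symmetric matrices are simultaneously orthogonally diagonalizable if and only if they commute, compute the three independent entries of $[\Sigma,N]$ --- which are indeed proportional to $\Sigma_3 N_\times$, $\Sigma_3(N_+-\sqrt{3}N_-)$ and $\Sigma_- N_\times-\Sigma_\times N_-$ --- and discharge them with constraints (\ref{C5}), (\ref{C4}) and (\ref{C3}) respectively, using $A=0$ and the orthogonality conditions $\Theta V_1=\Theta V_3=0$; all three identifications check out against the matrix forms in Appendix \ref{app:normalized} with $\Sigma_2=0$. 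The paper argues constructively instead: it splits into the case $V_3=\Sigma_3=0$ and its complement, and in each case exhibits an explicit frame rotation (around $\mathbf e_1$ in the first case, around $\mathbf e_2$ in the second, after the constraints force $N_\times=0$ and $N_+=\sqrt{3}N_-$) that diagonalizes both tensors. Your approach buys brevity and transparency: there is no case analysis, it isolates exactly which constraint kills which obstruction, and it immediately yields your converse observation that tilt ($\Theta V_i\neq 0$) generically obstructs commutativity. What the paper's constructive argument buys, and what the abstract appeal to the spectral theorem does not by itself deliver, is that the diagonalizing rotation can be chosen compatible with the gauge conditions $\Sigma_2=V_2=0$ and $N_{1a}=0$, so that the diagonal frame is still one of the frames admitted by the gauge-fixed formalism; the generic orthogonal matrix supplied by the spectral theorem could a priori rotate $V_a$ out of the gauge $V_2=0$, and verifying that this can be avoided essentially reinstates the paper's case analysis. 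Since the lemma's statement asks only for simultaneous diagonalizability, this is a difference in strength of conclusion, not a gap in your proof.
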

\begin{proof}
	For orthogonal class A models we have $A=0$, $\Theta V_1=0$ and $\Theta V_3=0$. Note from the evolution equations that these three conditions are preserved in time and thus define a cosmological model. In that case constraint (\ref{C3}) gives
	\be
	\Sigma_- N_\times = \Sigma_\times N_- \,.
	\label{dia:useful}
	\ee 
	First assume $V_3=\Sigma_3=0$. A frame rotation around $\mathbf e_1$ is then compatible with the gauge conditions $\Sigma_2=V_2=0$. This can be used to diagonalize $N_{ab}$, i.e. setting $N_\times=0$.  It follows from (\ref{dia:useful}) that $\Sigma_\times$ vanishes (and hence $\Sigma_{ab}$ is diagonal) if $N_- \neq 0$, or $\Sigma_\times$ can be chosen to vanish if $N_-=0$ (since in that case $N_{ab}$ is invariant under  frame rotations around $\mathbf e_1$). Thus $N_{ij}$ and $\Sigma_{ij}$ are simultaneously diagonalizable under the assumption $V_3=\Sigma_3=0$. Next consider the case that one or both of $V_3$ and $\Sigma_3$ are non-zero. It follows from the constraint equations that $N_\times=0$ and $N_+=\sqrt{3}N_-$ so that $N_{ab}$, $\Sigma_{ab}$ and $V_a$ have matrix representations on the form (\ref{matrix-representation}). In this case frame rotations around $\mathbf e_2$ are compatible with the gauge conditions $\Sigma_2=V_2=0$ and allow simultaneous diagonalization of $N_{ab}$ and $\Sigma_{ab}$.  	 
\end{proof}

\subsection{Cauchy problem \label{sub:cauchy}} 
General relativity has a well-posed Cauchy problem, which here implies that if all constraint equations (\ref{C1})-(\ref{C6}) are satisfied at a given instant of time, they are satisfied at any later (or earlier) time. To prove that the constraint and evolution equations above are mutually consistent in this way, provides a powerful check of the equations. We first assume that the functions $C_1, C_2, \dots, C_6$ defined above are time dependent. Their evolution equations are then obtained by differentiating with respect to $\tau$ and using the evolution equations above: 
\begin{align}
C_1'&= C_1 (2q+\Sigma_++\sqrt{3}\Sigma_-)-2\sqrt{3} C_2 \Sigma_\times  \,, \\
C_2'&= C_2 (2q+\Sigma_+-\sqrt{3} \Sigma_-)\,, \\
C_3'&= 2C_3(-1+q+\Sigma_+) + C_1 V_3  - \sqrt{3} C_5 \Sigma_3 \,, \\
C_4'&= C_4 (-2+2q-\Sigma_+ -\sqrt{3}\Sigma_-) + 2\sqrt{3} C_2 V_1 \,, \\
C_5'&= C_5 (-2+2q-\Sigma_+ +\sqrt{3}\Sigma_-) + 2 (C_4\Sigma_\times-C_1 V_1) \,, \\
C_6'&= 2q C_6 + 4A C_3\,. 
\end{align}
It follows from these equations that if all constraints are satisfied initially 
\be
C_1(\tau_0)=C_2(\tau_0)=\dots=C_6(\tau_0)=0 \,,
\ee
then they are always satisfied. This shows that the dynamical system under consideration has a well-posed Cauchy problem, as it should have.

\section{State space and invariant sets \label{ch:invariant}}
We start this section by identifying some discrete symmetries in \ref{sub:discrete} and establishing the classification in Bianchi types in \ref{ch:Bianchi}. This prepares section \ref{ch:BIS} where state space is written as a union of disjoint invariant sets, each of which is identified as a particular cosmological model. The main results are summarized in table \ref{tab:sets:C} and \ref{tab:sets:D}. Our routine for counting degrees of freedom, used throughout the paper, is established in section \ref{sub:counting}.

\subsection{Discrete symmetries \label{sub:discrete}}
The dynamical system enjoys some discrete symmetries that will simplify the analysis. Specifically, all evolution equations (\ref{O})-(\ref{Nx}) \emph{and} constraint equations (\ref{C1})-(\ref{C6}) are invariant under each of the following transformations: 
\begin{align}
(\Theta, V_1, V_3) & \quad\rightarrow\quad (-\Theta, -V_1, -V_3) \,, \label{inv1} \\
(A, N_+, N_-, N_\times) & \quad\rightarrow\quad (-A, -N_+, -N_-, -N_\times) \,, \label{inv2} \\
(V_3, \Sigma_3) & \quad\rightarrow\quad (-V_3, -\Sigma_3) \,, \label{inv3} \\
(V_1, V_3, \Sigma_\times, A, N_\times) & \quad\rightarrow\quad (-V_1, -V_3, -\Sigma_\times,  -A, -N_\times) \,. \label{inv4}  
\end{align}
The first symmetry (\ref{inv1}) can be traced back to the invariance of the ``Maxwell equations'' and the energy-momentum tensor, under $\boldsymbol{\mathcal F} \rightarrow -\boldsymbol{\mathcal F}$. The second symmetry (\ref{inv2}) can be traced back to the invariance of the metric (\ref{line-element}) under a parity flip $(\boldsymbol{\rm d}t,  \boldsymbol{\omega}^a) \rightarrow (\boldsymbol{\rm d}t,  -\boldsymbol{\omega}^a)$.\footnote{This follows by noting from (\ref{strcoeff}) that (\ref{inv2}) corresponds to flipping signs on the spatial structure coefficients, $\tensor{\gamma}{^m_{ab}} \rightarrow - \tensor{\gamma}{^m_{ab}}$, which corresponds to $(\boldsymbol{\rm d}t,  \boldsymbol{\omega}^a) \rightarrow (\boldsymbol{\rm d}t,  -\boldsymbol{\omega}^a)$ according to (\ref{dw}).} At last, the gauge fixing conditions (\ref{gauge:R})-(\ref{gauge:S2}) leave some discrete frame rotations identified in appendix \ref{app:discrete}. The symmetry (\ref{inv3}) and (\ref{inv4}) correspond to spatial half-turns of the frame around $\mathbf e_1$ and $\mathbf e_2$, respectively. 

The relevance of these symmetries is that they imply equivalence under evolution between state vectors related by any of the transformations above. To be concrete, let us write $X\sim Y$ if $X$ and $Y$ are two state vectors identical up to (\ref{inv1}), (\ref{inv2}), (\ref{inv3})  and/or (\ref{inv4}). It follows from the invariance of the evolution equations that if $X_0\sim Y_0$ for two sets of initial conditions, then the same is true at any time:
\be
X_0\sim Y_0 \implies X(\tau) \sim Y(\tau) \,.
\ee
Thus two state vectors related by $X\sim Y$ evolve equivalently. We will next employ this observation to introduce some sign conventions.

Since the signs of $V_3$ and $A$ are preserved by the evolution equations (\ref{V3}) and (\ref{A}), respectively, state space can be restricted to non-negative values of these variables 
\be
V_3 \ge 0, \quad A \ge 0 \,,
\label{gauge:lastfix}
\ee
without loss of generality. This sign convention will be used throughout the paper.\footnote{Note that on top of the this convention, (\ref{inv1})-(\ref{inv4}) still allow flipping the sign of, say, $\Sigma_3$. However, the sign of $\Sigma_3$ is generally not preserved in time as seen from evolution equation (\ref{S3}).}

\subsection{Bianchi classification \label{ch:Bianchi}}
The Bianchi classification is based on the structure of the covector $A_i$ and the matrix $N_{ab}$. The evolution equation (\ref{A}) for $A$ implies that $A=0$ and $A>0$ define invariant subspaces. These are the Class A and Class B models, respectively. These classes are further divided into Bianchi types based on the three eigenvalues of the matrix $N_{ab}$, which are:
\be
N_1=0 \,, \quad N_2 = N_+ + \sqrt{3N_-^2+3N_\times^2} \,, \quad N_3 = N_+ - \sqrt{3N_-^2+3N_\times^2} \,. 
\ee
In the orthonormal frame approach the eigenvalues are constants on the group orbits, but evolve in time. Therefore the invariant classification is actually based on the sign of the eigenvalues. To see that the sign of each eigenvalue is preserved in time, consider the evolution equations for $N_+$, $N_-$ and $N_\times$ from which it follows that  
\be
F' = 2(q+2\Sigma_+)F\,, \quad F \equiv N_2 N_3 = N_+^2-3N_-^2-3N_\times^2 \,.
\label{eq:F}
\ee
It follows from this equation, which is independent of the gauge choice (\ref{gauge:R}), that $F>0$, $F=0$ and $F<0$ are invariant subspaces. Consequently, the signs of $N_2$ and $N_3$ are preserved in time. 

By comparing the evolution equations for $F$ (\ref{eq:F}) with the evolution equation for $A$ (\ref{A}) we note that $F$ and $A^2$ evolve proportionally. Thus, in models with non-zero $F$ we have a constant of motion $h$ defined through
\be
A^2 = h (N_+^2 - 3N_-^2 - 3N_\times^2) \,.
\label{def:h}
\ee
This is the Bianchi type VI$_h$ if $h<0$ $(F<0)$ and Bianchi type VII$_h$ if $h>0$ ($F>0$). 

For convenience we list the definitions of Bianchi types I-VII$_h$ in terms of $A$ and the eigenvalues of $N_{ab}$, and translate them to our considered variables:\footnote{For $F=0$ there are two different possibilities for $N_{ab}$: it has either three zero eigenvalues ($N_1=N_2=N_3=0$) or one non-zero eigenvalue. This corresponds to type I and II, respectively, in class A models, and to type V and and IV, respectively in Class B models.}
\begin{enumerate}
	\item Type I: class A with three zero eigenvalues ($F=0$), \newline 
	\be
	A = 0, \quad  N_+ = N_-= N_\times = 0 \,.
	\label{type:I}
	\ee
	\item Type II: class A with one non-zero eigenvalue ($F=0$),
	\be
	A=0, \quad N_+^2 = 3 (N_-^2 + N_\times^2) > 0 \,.
	\label{type:II}
	\ee
	\item Type III: class B with two non-zero eigenvalues with opposite sign ($F<0$), it is the particular case $h=-1$ of type VI$_h$ (III$=$VI$_{-1}$),
	\be 
	A^2 = 3N_-^2+3N_\times^2-N_+^2 > 0 \,. \label{type:III}
	\ee
	\item Type IV: class B with one non-zero eigenvalue ($F=0$),
	\be
	 A>0, \quad N_+^2 = 3 (N_-^2 + N_\times^2) > 0 \,.
	\ee
	\item Type V: class B with three zero eigenvalues ($F=0$), \newline 
	\be
	A > 0, \quad  N_+ = N_-= N_\times = 0 \,. \label{type:V}
	\ee
	\item Type VI$_0$: class A with two non-zero eigenvalues with opposite sign ($F<0$),
	\be
	A=0, \quad N_+^2-3N_-^2-3N_\times^2 < 0 \,.
	\ee
	\item Type VI$_h$ ($-1 \neq h <0$):  class B with two non-zero eigenvalues with opposite sign ($F<0$),
	\be
	A^2=h(N_+^2-3N_-^2-3N_\times^2)>0 \,.
	\ee 
	\item Type VII$_0$: class A with two non-zero eigenvalues with the same sign ($F>0$),
	\be
	A=0, \quad N_+^2-3N_-^2-3N_\times^2 > 0 \,.
	\ee 
	\item Type VII$_h$ ($h>0$):  class B with two non-zero eigenvalues with the same sign ($F>0$), 
	\be 
	A^2=h(N_+^2-3N_-^2-3N_\times^2)>0 \,.
	\label{type:VIIh}
	\ee
\end{enumerate}

\subsection{Bianchi invariant sets \label{ch:BIS}}

\subsubsection{State space \label{sub:statespace}}
The state vector is 
\be
\mathrm{X} = (\Theta, V_1, V_3, \Sigma_+, \Sigma_-, \Sigma_\times, \Sigma_3, A, N_+, N_-, N_\times) \,,
\ee
modulo the five constraint equations (\ref{C1}), (\ref{C2}), (\ref{C3}), (\ref{C4}) and (\ref{C5}) and the inequality
\be
\Theta^2 + V_1^2 + V_3^2  + \Sigma^2 + A^2  + N_-^2 + N_\times^2 \le 1 \,.
\label{inequality}
\ee
Here the Hamiltonian constraint (\ref{C6}) has been used to eliminate $\Omega$ and the inequality corresponds to positive energy $\Omega\ge0$.  Let $D$ denote \emph{state space}, i.e. the set of all state vectors $\mathrm X$ in $\mathbb{R}^{11}$ that satisfy the five constraints and the inequality:
\be
D = \left\{ \;  \mathrm X \in \mathbb{R}^{11} \; | \; C_1=C_2=C_3=C_4=C_5=0\,,\; (\ref{inequality}) \; \right\} .
\label{D:def}
\ee

As observed in section (\ref{ch:Bianchi}) the signs of $A$ and the eigenvalues of $N_{ab}$ are preserved not only on the group orbits, but also in time. The Bianchi type of a given state vector $\mathrm X$ is therefore invariant under time evolution. Let $\mathcal B(\#)$ denote the union of all orbits in $D$ corresponding to a specific Bianchi type $\#$, where $\#$ denotes one of the Bianchi types I-VII$_h$ listed in section \ref{ch:Bianchi}. For instance $\mathcal B(\text{I})$ is the union of all orbits in $D$ of Bianchi type I. Following a standard language \cite{bok:EllisWainwright}, these sets will be referred to as \emph{Bianchi invariant sets}.

The conditions that define each invariant set is obtained by combining constraint equations (\ref{C1})-(\ref{C5}) with the defining conditions (\ref{def:h})-(\ref{type:VIIh}) for each Bianchi type. By inspection we note that the ``easiest'' way to satisfy all constraint equations is by choosing the conditions $\Sigma_3=V_3=0$. The associated invariant sets, of a specific Bianchi type $\#$, will be written as
\be
\mathcal C(\#) = \mathcal C^+(\#) \cup \mathcal C^0(\#) \,,
\ee
where the disjoint subsets are defined by the following conditions:
\begin{align}
\mathcal C^+(\#):& \qquad V_1^2>0\,,\quad V_3=0\,, \quad \Sigma_3=0\,, \quad (\ref{C3})\,, \qquad \text{``imperfect branch''}\,, \label{def:C+} \\
\mathcal C^0(\#):& \qquad V_1=0\,,\quad\; V_3=0\,, \quad \Sigma_3=0\,, \quad (\ref{C3})\,, \qquad \text{``perfect branch''}\,, \label{def:C0}
\end{align}
on top of the defining condition for the specific Bianchi type $\#$, see the enumerated list in section \ref{ch:Bianchi}. Note that the superscript indicates if imperfect matter is present (+) or absent (0). The corresponding sets will be referred to as the \emph{imperfect branch} and \emph{perfect branch}, respectively, which are invariant sets themselves. In Bianchi types IV, V, VI$_h$ with $h\neq \{ -1/9,\,-1 \}$ and VII$_h$ with $h\ge 0$ these sets are ``complete'', in the sense that they include all state vectors in $D$ of the given Bianchi type. This results from constraint equations (\ref{C1})-(\ref{C5}) by considering the complement of $V_3=\Sigma_3=0$. If $V_3=0$ and $\Sigma_3\neq0$ the constraints give $(N_+, N_\times)=\sqrt{3}(N_-, A)$, which implies Bianchi type I, II or VI$_{-1/9}$. If $V_3\neq0$ the constraints give $(N_+, A)=\sqrt{3}(N_-, N_\times)$, which implies Bianchi type I, II or III.

For Bianchi type I, II, III and VI$_{-1/9}$ the situation is more complicated and, as we shall see, the physics richer. These cases are investigated in subsections \ref{subsub:I} to \ref{subsub:exceptional}. Therein additional sets, with notation on the form
\be
\mathcal D(\#) = \mathcal D^+(\#) \cup \mathcal D^0(\#) \,, 
\ee
are defined so that state space $D$ can be build from a union of disjoint invariant sets. Again, the superscript indicates if the matter sector contains the isotropy violating field strength one-form:
\begin{align}
\mathcal D^+(\#):& \qquad  V_1^2+V_3^2 > 0 \,, \quad \text{``imperfect branch''} \,,  \\
\mathcal D^0(\#):& \qquad V_1^2+V_3^2 = 0 \,,  \quad \text{``perfect branch''} \,.
\end{align}

State space $D$ can now be written as a union of \emph{disjoint} invariant sets in the following way:
\be 
D = \bigcup_{i=1}^{10} \mathcal B (\#_i) =  \mathcal B (\text{I}) \cup \mathcal B (\text{II}) \cup \dots \cup \mathcal B (\text{VII}_0) \cup \mathcal B (\text{VII}_h) \,,
\label{D:subspaces}
\ee 
where
\begin{align}
\mathcal B(\text{I}) &= \mathcal D^+(\text{I}) \cup \mathcal D^0(\text{I})\,, \label{BIS:I} \\
\mathcal B(\text{II}) &= \mathcal{C^+(\text{II})} \cup \mathcal C^0(\text{II})   \cup \mathcal D^+(\text{II})|_{V_3=0}  \cup \mathcal D^+(\text{II})|_{V_3>0} \cup \mathcal D^0(\text{II}) \,, \label{BIS:II} \\
\mathcal B(\text{III}) &= \mathcal C^+(\text{III}) \cup \mathcal C^0(\text{III}) \cup \mathcal D^+(\text{III}) \,, \label{BIS:III} \\
\mathcal B(\text{IV}) &= \mathcal C^+(\text{IV}) \cup \mathcal C^0(\text{IV}) \,, \label{BIS:IV} \\
\mathcal B(\text{V}) &= \mathcal C^+(\text{V}) \cup \mathcal C^0(\text{V}) \,, \label{BIS:V} \\
\mathcal B(\text{VI}_0) &= \mathcal C^+(\text{VI}_0) \cup \mathcal C^0(\text{VI}_0) \,, \label{BIS:VI0} \\
\mathcal B(\text{VI}_{-1/9}) &= \mathcal C^+(\text{VI}_{-1/9}) \cup \mathcal C^0(\text{VI}_{-1/9}) \cup \mathcal D^+(\text{VI}_{-1/9}) \cup \mathcal D^0(\text{VI}_{-1/9}) \,, \label{BIS:VIexc} \\
\mathcal B(\text{VI}_h) &= \mathcal C^+(\text{VI}_h) \cup \mathcal C^0(\text{VI}_h) \,, \text{ where } 0>h\neq \{-1/9, -1\} \,, \label{BIS:VIh} \\
\mathcal B(\text{VII}_0) &= \mathcal C^+(\text{VII}_0) \cup \mathcal C^0(\text{VII}_0) \,, \label{BIS:VII0} \\ 
\mathcal B(\text{VII}_h) &=  \mathcal C^+(\text{VII}_h) \cup \mathcal C^0(\text{VII}_h) \,, \text{ where } h>0 \,.  \label{BIS:VIIh}
\end{align}

Tables \ref{tab:sets:C} and \ref{tab:sets:D} summarize main properties of the invariant sets above. In the second column the corresponding cosmological model is given: 
\begin{itemize}
	\item 'perfect' indicates that imperfect matter is absent ($V_i=\vec 0$), so that \emph{only} perfect matter ($\Omega$ and $X_0=\Theta$) is present,
	\item  'imperfect' indicates that the imperfect matter is present ($V_i\neq\vec 0$),
	\item 'orthogonal' indicates that the cosmological model is orthogonal, as defined in section \ref{sub:matter}.\footnote{Note that all perfect models are orthogonal, whereas the converse is not always true:  $\mathcal D^+(\text{II})|_{V_3>0}$ is imperfect, yet orthogonal.}
\end{itemize}
 
We also count the degrees of freedom associated with each set. Our routine is given in the subsection below.  The number of independent modes, $d$, is given in the tables together with the number of redundant gauge modes, $r$. The former is a gauge invariant quantity which measures the generality of the model and is similar to the dimension of the associated dynamical system. The latter is a non-fundamental quantity which, if non-zero, reflects that the gauge is not fixed completely by the gauge fixing conditions (\ref{gauge:R})-(\ref{gauge:S2}). See subsection \ref{sub:counting} for details. 

Physically, a distinction between the imperfect sets of type $\mathcal C^+(\#)$ and $\mathcal D^+(\#)$, is that it is only in the latter that the direction of the spatial field-strength one-form $(X_i)$ may rotate relative to a gyroscope. Consider a gyroscope initially oriented so that the spin axis is aligned with $X_i$. In all sets of type $\mathcal C^+(\#)$ the spin axis remains aligned with $X_i$. This follows by noting that the spatial direction of the field strength one-form $X_i=(V_1, 0 ,0)$ is parallel to the axis of the frame rotation $R_i=(\sqrt{3}\Sigma_\times, 0, 0)$, which is unique for the sets $\mathcal C^+(\#)$. In the sets $\mathcal D^+(\#)$ the direction of $X_i(t)$ will generally move away from the spin axis of the gyroscope, and in this sense $X_i$ is rotating.

\subsubsection{Space of initial data \label{sub:counting}}

For spatially homogeneous models the dimension of the space of initial data provides a measure of the generality of each model. This measure is reliable only because the functions involved in the counting are constants \cite{bok:EllisWainwright}. In each invariant set the dimension of the space of initial data, $d$, is the number of \emph{independent} parameters required to define initial conditions on a Cauchy surface ($\tau=\text{constant}$). This is the same as the dimension of the dynamical system associated with each invariant set (upon fixing the gauge completely). 

To obtain the dimension $d$ of a given set one must look at the defining conditions for the set and count the number of constraint equations ($c$) and the number of (redundant) gauge modes $r$. The latter requires a definition. If two state vectors are identical upon a frame rotation, $Y=T(X)$ for some transformation $T$, we say that the state vectors are equivalent and write $X\sim Y$. The \emph{gauge orbit} of a state vector $X$ (in a set $\mathcal B$) is the set of all state vectors $Y$ (in $\mathcal B$) that satisfy $Y\sim X$. We then define $r$ as the dimension of the gauge orbit.  We can then obtain the dimension $d$, algorithmically, as the number of coordinates in $\mathbb R^{11}$ minus the number of constraint equations ($c$) minus the dimension of the gauge orbit ($r$):
\be
d=11-c-r \,.
\label{counting:alg}
\ee

Informally, $d$ is the number of independent modes associated with \ $\{\rho, x_\mu, \sigma_{ab},a_i, n_{ab}\}$, whereas $r$ is the number of redundant gauge modes left by the gauge-fixing conditions (\ref{gauge:R})-(\ref{gauge:S2}). Note that $r$ is not necessarily the dimension of the rotation group $T$, since in sets with higher symmetries the state vector is invariant under a 1 or 3 dimensional group of rotations. Our definition of $r$, as the dimension of the gauge orbit\footnote{The term 'gauge orbit' is a bit misleading in the case of Bianchi type I, since in this case the set of transformations that preserve $\Sigma_2=V_2=0$ do not form a group.  However, $r$ is still the number of redundant modes and the algorithm (\ref{counting:alg}) is still applicable, see section \ref{subsub:I} for details.}, ensures that $d$ has the same meaning in models with higher symmetries, as considered in section \ref{ch:higher}. In such models the absence of a unique orientation of the frame does not introduce any dependency among the variables.  As for the number of constraints $c$, note that only constraint \emph{equations} are counted. Constraint \emph{inequalities} are not counted since they constrain the volume of state space and not its dimension.\footnote{As an example consider the set $\mathcal D^0(\text{VI}_{-1/9})$ that is defined by the conditions (\ref{def:D0VIexp}), and for which table \ref{tab:sets:D} states $c=5$. The five constraints counted are: $V_1=0$, $V_3=0$, $N_\times=\sqrt{3}A$, $N_+ = \sqrt{3} N_-$ and (\ref{C3}). Note that (\ref{def:h}) with $h=-1/9$ is satisfied trivially. In the set $\mathcal C^0(\text{VI}_{-1/9})$, on the other hand, (\ref{def:h}) must be counted in addition to $V_1=0$, $V_3=0$, $\Sigma_3=0$ and $(\ref{C3})$. Thus $c=5$ also for $\mathcal C^0(\text{VI}_{-1/9})$. \label{fn:ex:c}}

In tables \ref{tab:sets:C} and \ref{tab:sets:D} the dimension $d$ of each invariant set is given. In Bianchi type VI$_h$ and VII$_h$ the group parameter $h$ is considered fixed.  In the perfect branch the counting agrees with Wainwright and Hsu \cite{Wainwright-Hsu:1989} and Hewitt and Wainwright \cite{Hewitt-Wainwright:1993} who considered orthogonal Bianchi class A and B models, respectively, containing a perfect fluid. Also see table 9.5 in \cite{bok:EllisWainwright} and table 18.3 in \cite{bok:EllisMaartensMacCallum}. Upon comparison it must be taken into account that in the perfect branch we actually have \emph{two} non-tilted perfect fluids: a stiff fluid ($\Theta$) and the usual $\gamma$-law perfect fluid ($\Omega$). For the perfect branch the dimension $d$ is therefore \emph{two} larger than in the corresponding vacuum model.

The number of redundant gauge modes $r$ are also summarized in the tables. In section \ref{sub:gauge} we already fixed the gauge as much as possible without restricting state space $D$.  Thus $r=0$ for the most general sets, which are found in models of Bianchi type III and VI$_{-1/9}$, both with dimension $d=7$. In all sets of type $\mathcal C(\#)$, summarized in table \ref{tab:sets:C}, the gauge is fixed only up to a rotation around $\mathbf e_1$ which gives a one-dimensional gauge orbit ($r=1$).  This is even the case in imperfect Bianchi models of class B ($A\neq0$), since the covectors $A_i$ and $V_i$ are aligned in all sets of type $\mathcal C(\#)$. For sets of type $\mathcal D(\#)$, summarized in table \ref{tab:sets:D}, the counting is a bit more complicated and details are given in subsections \ref{subsub:I} - \ref{subsub:exceptional}.

\begin{table}[H]
	\newcommand\T{\rule{0pt}{2.6ex}}
	\newcommand\B{\rule[-1.2ex]{0pt}{0pt}}
	\begin{center}
		\begin{tabular}{ l l l l l }
			\hline\hline 
			Notation \T\B   &  Cosmological model & $d$ &  $r$       \\ \hline
			\T\B $\mathcal C^+(\text{II})$ & Bianchi II, imperfect & 5  & 1    \\					
			\T\B $\mathcal C^0(\text{II})$ & Bianchi II, perfect, orthogonal & 4  & 1   \\			
			\T\B $\mathcal C^+(\text{III})$ & Bianchi III, imperfect & 6  & 1    \\					
			\T\B $\mathcal C^0(\text{III})$ & Bianchi III, perfect, orthogonal & 5  & 1   \\
			
			\T\B $\mathcal C^+(\text{IV})$ & Bianchi IV, imperfect & 6  & 1      \\					
			\T\B $\mathcal C^0(\text{IV})$ & Bianchi IV, perfect, orthogonal & 5  & 1   \\
			\T\B $\mathcal C^+(\text{V})$ & Bianchi V, imperfect & 4  & 1    \\					
			\T\B $\mathcal C^0(\text{V})$ & Bianchi V, perfect, orthogonal & 3  & 1  \\
			
			\T\B $\mathcal C^+(\text{VI}_h)$ & Bianchi VI$_h$ ($h\le 0$), imperfect & 6  & 1     \\					
			\T\B $\mathcal C^0(\text{VI}_h)$ & Bianchi VI$_h$ ($h\le 0$), perfect, orthogonal & 5  & 1 \\														
			\T\B $\mathcal C^+(\text{VII}_h)$ & Bianchi VII$_h$ ($h\ge 0$), imperfect & 6  & 1   \\					
			\T\B $\mathcal C^0(\text{VII}_h)$ & Bianchi VII$_h$ ($h\ge 0$), perfect, orthogonal & 5  & 1 \\														
			\hline\hline
		\end{tabular}
		\caption{ Invariant sets defined by (\ref{def:C+}) and (\ref{def:C0}) and properties of the corresponding cosmological model (see text for terminology and definitions). $d$ is the dimension of the space of initial data (gauge invariant), which is also the dimension of the corresponding dynamical system. $r$ is the number of unphysical gauge modes. In models of Bianchi type VI$_h$  and VII$_h$ the group parameter $h$ is considered fixed. } 		
		\label{tab:sets:C}
	\end{center}
\end{table}
\begin{table}[H]
	\newcommand\T{\rule{0pt}{2.6ex}}
	\newcommand\B{\rule[-1.2ex]{0pt}{0pt}}
	\begin{center}
		\begin{tabular}{ l l l l }
			\hline\hline 
			Notation \T\B   &  Cosmological model & $d$ &  $r$       \\ \hline
			\T\B $\mathcal D^+(\text{I})$ & Bianchi I, imperfect, orthogonal & 5 & 1   \\		
			\T\B $\mathcal D^0(\text{I})$ & Bianchi I, perfect, orthogonal & 3 & 2   \\		
			\T\B $\mathcal D^+(\text{II})|_{V_3=0}$ & Bianchi II, imperfect & 6  & 0  \\
			\T\B $\mathcal D^+(\text{II})|_{V_3>0}$ & Bianchi II, imperfect, orthogonal & 5  & 1 \\ 
			\T\B $\mathcal D^0(\text{II})$ & Bianchi II, perfect, orthogonal & 4  & 1 \\ 
			\T\B $\mathcal D^+(\text{III})$ & Bianchi III, imperfect & 7  & 0   \\					
			\T\B $\mathcal D^+(\text{VI}_{-1/9})$ & Bianchi VI$_{-1/9}$, imperfect & 7  & 0     \\					
			\T\B $\mathcal D^0(\text{VI}_{-1/9})$ & Bianchi VI$_{-1/9}$, perfect, orthogonal & 6  & 0   \\								
			\hline\hline
		\end{tabular}
		\caption{Invariant sets defined in subsections \ref{subsub:I} - \ref{subsub:exceptional}. See caption of table \ref{tab:sets:C}.} 
		\label{tab:sets:D}
	\end{center}
\end{table}

\subsubsection{Bianchi type I \label{subsub:I}}
All Bianchi type I orbits in state space belong to the imperfect set $\mathcal D^+(\text{I})$ or the perfect set $\mathcal D^0(\text{I})$. They are defined by the following conditions:
\begin{align}
\mathcal D^+(\text{I}):& \quad V_1^2+V_3^2 > 0 \,, \quad \Theta=0 \,, \quad (\ref{type:I}) \,,  \label{D+} \\
\mathcal D^0(\text{I}):& \quad  V_1^2+V_3^2 = 0 \,, \quad (\ref{type:I}) \,. \label{D0}
\end{align}
Note that $\mathcal C^+(\text{I})$ and $\mathcal C^0(\text{I})$ are subsets:
\be
\mathcal C^+(\text{I}) \subset \mathcal D^+(\text{I}) \,, \quad 
\mathcal C^0(\text{I}) \subset \mathcal D^0(\text{I}) \,,
\ee
and therefore left out from the union of disjoint sets on the right-hand side of (\ref{BIS:I}).

In order to calculate the dimension of initial data, we start by noting that $\mathcal D^+(\text{I})$ and $\mathcal D^0(\text{I})$ are defined by 5 and 6 constraint equations, respectively. Since Bianchi type I has flat spatial sections there is no unique tangent plane associated with the $G_2$ subgroup. We therefore have full freedom in choice of orientation of the spatial frame. In $\mathcal D^0(\text{I})$ one of the three gauge degrees of freedom has already been absorbed by the gauge choice $\Sigma_2=0$, so that $r=2$ redundant degrees of freedom remains; whereas in $\mathcal D^+(\text{I})$ two has been absorbed by the gauge conditions $V_2=\Sigma_2=0$, so that $r=1$. Using the algorithm (\ref{counting:alg}) the number of independent modes are: 
\begin{align}
\mathcal D^+(\text{I}):& \quad  d = 11 - 5 - 1 = 5 \,, \\
\mathcal D^0(\text{I}):& \quad  d = 11 - 6 - 2 = 3 \,. 
\end{align}
See \cite{Ben18} for examples on how the gauge can be fixed completely, resulting in dynamical systems with $d$ independent variables. 

It is informative to compare with the less general subsets:
\begin{align}
\mathcal C^+(\text{I}):& \quad  d = 11 - 7 - 1 = 3 \,, \\
\mathcal C^0(\text{I}):& \quad  d = 11 - 7 - 1 = 3 \,. 
\end{align}
As in all type $\mathcal C^+(\#)$ sets the direction of the spatial field strength one-form $X_i$ is fixed to the spin axis of a gyroscope. This explains why the number of physical modes ($d$) are similar in $\mathcal C^+(\text{I})$ and $\mathcal C^0(\text{I})$: a non-rotating ``vector'' ($V_i$) has no more physical modes than a ``scalar'' ($\Theta$). When the ``rotation modes'' of $V_i$ are released $d$ raises from 3 in $\mathcal C^+(\text{I})$ to 5 in $\mathcal D^+(\text{I})$. For more comments see \cite{Ben18}, where the rotation modes was shown to have an important role, qualitatively, in the dynamical system of Bianchi type I.

\subsubsection{Bianchi type II \label{subsub:II}} 
The set $\mathcal C(\text{II})$ does not cover all Bianchi type II orbits in state space $D$. We let $\mathcal D(\text{II})$ denote the \emph{remaining} orbits in $\mathcal B(\text{II})$: 
\be
\mathcal B(\text{II}) = \mathcal C(\text{II}) \cup \mathcal D(\text{II}) \,, \quad \mathcal C(\text{II}) \cap \mathcal D(\text{II}) = \emptyset \,.
\ee
$\mathcal D(\text{II})$ is the union of three disjoint sets
\be 
\mathcal D(\text{II}) =  \mathcal D^+(\text{II})|_{V_3>0} \cup \mathcal D^+(\text{II})|_{V_3=0} \cup \mathcal D^0(\text{II}) \,,
\ee
that are defined by the following conditions:
\begin{align}
&\mathcal D^+(\text{II})|_{V_3=0}: \qquad V_1^2>0\,, \quad V_3=0\,,\quad \Sigma_3\neq0\,,\quad \Sigma_\times N_- + \Theta V_1 = 0\,, \quad (\ref{common:DII}) \,, \\
&\mathcal D^+(\text{II})|_{V_3>0}: \qquad V_3>0\,,\quad \Theta = 0\,,\quad \Sigma_\times = 0 \,, \quad (\ref{common:DII}) \,, \label{def:DII:1} \\
&\mathcal D^0(\text{II}): \qquad \qquad \;  V_1^2 + V_3^2 = 0\,,\quad \Sigma_3 \neq 0 \,,\quad  \Sigma_\times = 0 \,, \quad (\ref{common:DII}) \,, \label{def:DII:3}
\end{align}
where the common conditions are
\be
A=0\,, \quad N_\times=0\,, \quad N_+ = \sqrt{3} N_- \neq 0 \,. 
\label{common:DII}
\ee

Upon counting constraints and redundant modes we obtain the number of independent modes using the algorithm (\ref{counting:alg}):
\begin{align}
&\mathcal C^+(\text{II}): \qquad \qquad \; d = 11-5-1 = 5 \,, \\
&\mathcal C^0(\text{II}): \qquad \qquad \;\, d = 11-6-1 = 4 \,, \\
&\mathcal D^+(\text{II})|_{V_3=0}: \qquad d = 11-5-0 = 6 \,, \\
&\mathcal D^+(\text{II})|_{V_3>0}: \qquad d = 11-5-1 = 5 \,, \\
&\mathcal D^0(\text{II}): \qquad \qquad \; d = 11-6-1 = 4 \,.
\end{align}
In both branches of $C(\text{II})$ the redundant degree of freedom corresponds to a frame rotation around $\mathbf e_1$. In $\mathcal D^+(\text{II})|_{V_3=0}$ the gauge is fixed completely ($r=0$) since a rotation around $\mathbf e_1$ or $\mathbf e_2$ is incompatible with constraint equations (\ref{C4}) and (\ref{C2}), respectively, whereas a rotation around $\mathbf e_3$ is incompatible with the gauge fixing condition $V_2=0$. 

The dimension of the sets $\mathcal D^+(\text{II})|_{V_3>0}$ and $\mathcal D^0(\text{II})$ require some further explanation. Note that the defining conditions (\ref{def:DII:1}) imply that the tensors in $\mathcal D^+(\text{II})|_{V_3>0}$ have the form
\be
\Sigma_{ab} = \begin{bmatrix}
	* & 0 & * \\
	0 & * & 0 \\
	* & 0 & * 	
\end{bmatrix} , \quad 
N_{ab} = \begin{bmatrix}
	0 & 0 & 0 \\
	0 & * & 0 \\
	0 & 0 & 0 	
\end{bmatrix} , \quad 
V_{a} = \begin{bmatrix}
* \\
0 \\
*  	
\end{bmatrix} ,
\label{matrix-representation}
\ee
where an asterix denotes a non-zero component, and similar in $\mathcal D^0(\text{II})$ with the exception that $V_{a}=\vec{0}$. Therefore, the redundant degree of freedom actually corresponds to a frame rotation around $\mathbf e_2$, and not around $\mathbf e_1$ like in $\mathcal C(\text{II})$. The existence of two distinct gauge transformations in Bianchi type II is a consequence of the fact that $N_{ab}$ has two zero-eigenvalues; there are actually two distinct pairs of Killing vectors corresponding to the $G_2$ subgroup.

\subsubsection{Bianchi type III \label{subsub:III}} 
$\mathcal C(\text{III})$ does not cover all orbits of Bianchi type III. Additionally, the following imperfect branch set is needed:
\begin{align}
\mathcal D^+(\text{III}):& \qquad V_3>0 \,, \quad N_\times \Sigma_3 + \Theta V_3 = 0\,, \quad (\ref{C3}) \,, \quad (\ref{common:DIII1}) \,, \quad (\ref{common:DIII2}) \,,  \label{def:D+III}  
\end{align}
where  
\begin{align}
A &= \sqrt{3} N_\times > 0 \,, \label{common:DIII1} \\ 
N_+ &= \sqrt{3} N_- \,.  \label{common:DIII2} 
\end{align}
Note that given (\ref{common:DIII1}) and (\ref{common:DIII2}) the defining equation (\ref{def:h}) for the parameter $h$, uniquely gives $h=-1$. Thus the set is of Bianchi type VI$_{-1}=\text{III}$. Bianchi type III universes with $V_3>0$ necessarily belongs to $\mathcal D^+(\text{III})$.  Other Bianchi type III universes, with $V_3=0$, belongs to $\mathcal{C}(\text{III}) \equiv \mathcal{C}^+(\text{III}) \cup \mathcal{C}^0(\text{III})$. In particular:\footnote{Note that the limit in (\ref{C+III:subset}) is an invariant subset itself.}
\be
\lim\limits_{V_3\rightarrow 0} \mathcal D^+(\text{III}) \subset  \mathcal C(\text{III})  \,.
\label{C+III:subset}
\ee 
However, the two sets in (\ref{C+III:subset}) are physically equivalent, as proved by Lemma \ref{Lemma:spectrum:III} below. Specifically, $\mathcal C(\text{III})$ occupies a larger region in state space only because of the presence of a gauge mode. Thus (\ref{common:DIII1})-(\ref{common:DIII2}) provide an example on how the gauge can be fixed in the sets $\mathcal C^+(\text{III})$ and $\mathcal C^0(\text{III})$.  
\begin{lemma}
	The sets $\lim\limits_{V_3\rightarrow 0} \mathcal D^+(\text{III})$ and $\mathcal C(\text{III})$ are equivalent upon frame rotations.
	\label{Lemma:spectrum:III}
\end{lemma}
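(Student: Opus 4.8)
The plan is to exploit the single residual gauge freedom---rotation of the spatial frame about $\mathbf{e}_1$---that survives in $\mathcal C(\text{III})$ (where Table \ref{tab:sets:C} records $r=1$ for both branches), and to show that this freedom is exactly what gets frozen by imposing the relations (\ref{common:DIII1})--(\ref{common:DIII2}) defining $\mathcal D^+(\text{III})$. Since (\ref{C+III:subset}) already supplies the inclusion $\lim_{V_3\to0}\mathcal D^+(\text{III})\subset\mathcal C(\text{III})$, the entire content of the lemma is the reverse statement: every state in $\mathcal C(\text{III})$ can be carried by an $\mathbf{e}_1$-rotation into $\lim_{V_3\to0}\mathcal D^+(\text{III})$.

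First I would make the target set explicit. Taking $V_3\to0$ in (\ref{def:D+III}), the condition $N_\times\Sigma_3+\Theta V_3=0$ collapses to $N_\times\Sigma_3=0$; but (\ref{common:DIII1}) gives $N_\times=A/\sqrt3>0$, forcing $\Sigma_3=0$. Hence $\lim_{V_3\to0}\mathcal D^+(\text{III})$ is characterised by $V_3=\Sigma_3=0$, the constraint (\ref{C3}), and the gauge-fixing relations $A=\sqrt3 N_\times$ and $N_+=\sqrt3 N_-$, whereas $\mathcal C(\text{III})$ carries the same $V_3=\Sigma_3=0$ together with the full type-III condition (\ref{type:III}), $A^2=3N_-^2+3N_\times^2-N_+^2>0$.

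The heart of the argument is the rotation. Using the rules of Appendix \ref{app:transformations}, under a frame rotation through angle $\theta$ about $\mathbf{e}_1$ the quantities $\Theta,V_1,\Sigma_+,A,N_+$ are invariant, the pairs $(\Sigma_2,\Sigma_3)$ and $(V_2,V_3)$ transform as spin-$1$ doublets, and the pairs $(\Sigma_-,\Sigma_\times)$ and $(N_-,N_\times)$ as spin-$2$ doublets (rotating through $2\theta$). The vanishing components $\Sigma_2=V_2=0$ (gauge) and $\Sigma_3=V_3=0$ (set) are therefore preserved, and $N_-^2+N_\times^2$ is a rotation invariant. The type-III condition rewrites as $A^2+N_+^2=3(N_-^2+N_\times^2)$, so the prospective target $(N_-,N_\times)=(N_+/\sqrt3,\,A/\sqrt3)$ has exactly the invariant radius $\sqrt{N_-^2+N_\times^2}$; since the spin-$2$ action sweeps $(N_-,N_\times)$ through every direction at fixed radius, there is an angle $\theta$ realising $N_+=\sqrt3 N_-$ and $A=\sqrt3 N_\times$ simultaneously. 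Finally, (\ref{C3}) is preserved because its only non-invariant piece, $\Sigma_-N_\times-\Sigma_\times N_-$, is the two-dimensional cross product of the two spin-$2$ doublets and hence rotation invariant. The rotated state thus lies in $\lim_{V_3\to0}\mathcal D^+(\text{III})$, which together with (\ref{C+III:subset}) establishes the equivalence.

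The main obstacle, and the step I would verify most carefully against Appendix \ref{app:transformations}, is pinning down the exact doublet conventions---that $(N_-,N_\times)$ and $(\Sigma_-,\Sigma_\times)$ rotate at the common rate $2\theta$ and with matched orientation---since this is what renders the cross term in (\ref{C3}) invariant and what guarantees the radius-matching that lets the target be reached. Up to possible factor-of-two or sign conventions, no genuine obstruction arises: the type-III inequality places the target on the same invariant circle as the initial $(N_-,N_\times)$, so solvability of the rotation equation is automatic.
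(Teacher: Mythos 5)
Your proof is correct and follows essentially the same route as the paper: both exploit the residual frame rotation about $\mathbf e_1$, the spin-2 transformation of $[N_-,N_\times]^T$ from Appendix \ref{app:transformations}, and the rewriting of (\ref{type:III}) as $A^2+N_+^2=3(N_-^2+N_\times^2)$ to show the gauge conditions (\ref{common:DIII1})--(\ref{common:DIII2}) can always be reached. The only (cosmetic) difference is that you rotate $(N_-,N_\times)$ directly onto the target point $(N_+/\sqrt3,\,A/\sqrt3)$ in one step, whereas the paper first arranges $\sqrt3 N_\times=A$ and then uses a discrete reflection to fix the sign in $N_+=\sqrt3 N_-$; your explicit check that (\ref{C3}) is rotation invariant is also a sound, if implicit in the paper, verification.
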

\begin{proof}
	First calculate the limit and compare the defining conditions:
	\begin{align}
	\lim\limits_{V_3\rightarrow 0} \mathcal D^+(\text{III}):& \qquad V_3=0 \,, \quad \Sigma_3 = 0\,, \quad (\ref{C3}) \,, \quad (\ref{common:DIII1}) \,, \quad (\ref{common:DIII2}) \,, \label{endef:1} \\
	\mathcal C(\text{III}):& \qquad V_3=0 \,, \quad \Sigma_3 = 0\,, \quad (\ref{C3}) \,, \quad (\ref{type:III}) \,. \label{endef:2}
	\end{align}
	Then use the transformation properties in Appendix \ref{app:transformations} to show that condition (\ref{type:III}) takes the form (\ref{common:DIII1}) and (\ref{common:DIII2}) in a particular frame. Specifically, the Bianchi type III defining condition (\ref{type:III}) can be written
	\be
	A^2+N_+^2 = 3N_-^2 + 3N_\times^2  \,.
	\label{hei23}
	\ee
	The variables on the left-hand side are spin-0 objects under a frame rotation around $\mathbf e_1$, whereas $[N_-, N_\times]^T$ is a spin-2 object. It follows that 
	\be
	3N_\times^2 \in [0,\; A^2 + N_+^2] \,.
	\ee 
	Thus we can always fix the gauge, up to a reflection $[N_-, N_\times]^T \rightarrow [-N_-, N_\times]^T$, by choosing a frame so that $\sqrt{3} N_\times=A$. This is identical to (\ref{common:DIII1}). Then (\ref{hei23}) gives $N_+=\pm\sqrt{3}N_-$. By employing the remaining reflection, which corresponds to a remaining discrete frame rotation around $\mathbf e_1$, one can always choose $N_+=\sqrt{3}N_-$ which is identical to (\ref{common:DIII2}).
\end{proof}

As for the counting of modes, the defining conditions fix the frame uniquely in $\mathcal D^+(\text{III})$ so that $r=0$, whereas $r=1$ in $\mathcal{C}^+(\text{III})$ and $\mathcal{C}^0(\text{III})$. Upon counting constraint equations we obtain the number of independent modes using (\ref{counting:alg}): 
\begin{align} 
\mathcal{D}^+(\text{III}):& \quad d = 11 - 4 - 0 = 7 \,, \\ 
\mathcal{C}^+(\text{III}):& \quad d = 11 - 4 - 1 = 6 \,, \\
\mathcal{C}^0(\text{III}):& \quad d = 11 - 5 - 1 = 5 \,. 
\end{align}

\subsubsection{Bianchi type VI$_{-1/9}$ \label{subsub:exceptional}} 
Defining conditions: 
\begin{align}
\mathcal D^+(\text{VI}_{-1/9}):& \qquad V_1^2 > 0, \quad V_3=0\,, \quad (\ref{C3}) \,, \quad (\ref{common:BVIex1})\,, \quad (\ref{common:BVIex2}) \,, \\
\mathcal D^0(\text{VI}_{-1/9}):& \qquad V_1^2+V_3^2 = 0\,,  \quad (\ref{C3})\,, \quad (\ref{common:BVIex1})\,, \quad (\ref{common:BVIex2}) \,, \label{def:D0VIexp}
\end{align}
where
\begin{align}
&\Sigma_3 \neq 0 \,, \label{common:BVIex1} \\ 
&N_\times = \sqrt{3} A > 0 \,, \quad N_+=\sqrt{3}N_- \,. \label{common:BVIex2} 
\end{align}
This set belongs to Bianchi type VI$_h$ with $h=-1/9$. The set $\mathcal D^0(\text{VI}_{-1/9})$ is often referred to as the ``exceptional case'', because it is the only model with $\sigma_{12}^2+\sigma_{13}^2>0$, and thus the only model in which the group $G_2$ acts 'orthogonally transitively', among orthogonal Bianchi models containing a perfect fluid \cite{bok:EllisWainwright}. In the presence of the scalar field, as we have seen, this is not unique for Bianchi type $\text{VI}_{-1/9}$. Note that $\mathcal C(\text{VI}_{-1/9})$ and $\mathcal D(\text{VI}_{-1/9})$ are disjoint since only the latter has $\Sigma_3\neq0$. 

Note that the gauge is fixed uniquely in both sets; a frame rotation around $\mathbf e_1$ is incompatible with the gauge fixing condition $\Sigma_2=0$. Thus $r=0$ in both $\mathcal D^+(\text{VI}_{-1/9})$ and $\mathcal D^0(\text{VI}_{-1/9})$.  With 4 and 5 constraints on $\mathbb R^{11}$, respectively, the number of independent modes are: 
\begin{align}
\mathcal D^+(\text{VI}_{-1/9}):& \quad d=11-4-0=7 \,, \\
\mathcal D^0(\text{VI}_{-1/9}):& \quad d = 11 - 5 - 0 = 6 \,. \label{count:D0VIexp}
\end{align}
This is one more than the corresponding sets with $\Sigma_3=0$ (see footnote \ref{fn:ex:c} for details):
\begin{align}
\mathcal C^+(\text{VI}_{-1/9}):& \quad d=11-4-1=6 \,, \\
\mathcal C^0(\text{VI}_{-1/9}):& \quad d = 11 - 5 - 1 = 5 \,.
\end{align}
Along the same line as the proof of Lemma \ref{Lemma:spectrum:III}, the gauge mode of $\mathcal C(\text{VI}_{-1/9})$ can be removed by choosing the gauge conditions (\ref{common:BVIex2}). In this gauge the distinction between the sets is entirely in the variable $\Sigma_3$ which is zero in $\mathcal C(\text{VI}_{-1/9})$ and non-zero in $\mathcal D(\text{VI}_{-1/9})$. A gauge-invariant, and physical, consequence of this is that $\mathcal D^+(\text{VI}_{-1/9})$ contains a spatial field strength one-form $X_a$ that rotates relative to an inertial frame, whereas in $\mathcal C^+(\text{VI}_{-1/9})$ the direction of $X_a$ is fixed relative to the spin axis of a gyroscope.

\section{Higher symmetries \label{ch:higher}}
The goal of this section is to derive all subsets of state space that represent higher-symmetry models. This include locally rotationally symmetric (LRS) models as well as isotropic (FLRW) models, which enjoy isotropy subgroups of dimension 1 and 3, respectively. Higher-symmetry models with imperfect matter are found in the same spacetimes as LRS models with perfect fluids. The latter class is well understood \cite{bok:EllisWainwright} and we start with a brief summary of well-known results:
\begin{enumerate}
	\item FLRW models have representations in Bianchi type I (flat), V (open), VII$_0$ (flat) and VII$_h$ with $h>0$ (open) and IX (closed). Such models have isotropic spatial geometry and expansion. By breaking the isotropy of the expansion, these models become LRS models if the shear tensor is axially symmetric.
	\item Other LRS models, where the spatial geometry is anisotropic but symmetric with respect to a fixed axis, are found in Bianchi type II, III and VIII. In these models both the expansion and the spatial curvature are axially symmetric with a common axis.
\end{enumerate}

Below we derive all representations in state space $D$ of higher-symmetry models, for the imperfect matter sector as well as for the perfect matter sector. These subsets will be denoted by the symbol $\mathcal S$. In particular the union of LRS models of a Bianchi type $\#$ are denoted $\mathcal S(\#)$, which is the union of the imperfect and perfect branches:
\be
\mathcal S(\#) = \mathcal S^+(\#) \cup \mathcal S^0(\#) \,.
\ee

See table \ref{tab:sets:LRS} for a summary of the main properties of the obtained higher-symmetry models. Included are also non-LRS models that possess expansion symmetry, derived in section \ref{ch:ex-sym}.  As explained in section \ref{sub:counting}, the algorithm (\ref{counting:alg}) is applicable for higher-symmetry models and has been used to calculate the dimension $d$ of the space of initial data.

\begin{table}[H]
	\footnotesize
	\newcommand\T{\rule{0pt}{2.6ex}}
	\newcommand\B{\rule[-1.2ex]{0pt}{0pt}}
	\begin{center}
\begin{adjustbox}{angle=0}
		\begin{tabular}{ l l l l l l l }
			\hline\hline 
			Notation \T\B   &  Cosmological model & $d$ &  $r$ & symmetry axis & superset     \\ \hline
			\T\B $\mathcal S^+(\text{I})$ & LRS Bianchi I, spatially flat, imperfect, orthogonal & 2  & 1  & $\mathbf n_\text{LRS}\cdot \mathbf e_2 = 0 $ & $\mathcal D^+(\text{I})$  \\					
			\T\B $\mathcal S^0(\text{I})$ & LRS Bianchi I, spatially flat,  perfect, orthogonal & 2  & 2  &  (see text) & $\mathcal D^0(\text{I})$ \\								
			\T\B $\mathcal S^0(\text{I})|_{\Sigma_{ab}=0}$ & flat FLRW, perfect, orthogonal & 1  & 0  &  (isotropic) & $\mathcal S^0(\text{I})$ \\								
			\T\B $\mathcal S^0(\text{II})$ & LRS Bianchi II, perfect, orthogonal & 3  & 1  & $\mathbf n_\text{LRS} \cdot \mathbf e_1 = 0 $ & $\mathcal C^0(\text{II})$  \\								
			\T\B $\mathcal S^+(\text{III})$ & LRS Bianchi III, imperfect, orthogonal & 3  & 0  & $\mathbf n_\text{LRS} \propto \mathbf e_3 $ & $\mathcal D^+(\text{III})$  
			\\											
			\T\B $\mathcal S^0(\text{III})$ & LRS Bianchi III, perfect, orthogonal & 4  & 1  & $\mathbf n_\text{LRS} \cdot \mathbf e_1 = 0$ & $\mathcal C^0(\text{III})$  \\
			\T\B $\mathcal S_{\text{SF}}^+(\text{III})$  & shear-free, LRS Bianchi III, imperfect, orthogonal & 1  & 0  & $\mathbf n_\text{LRS} \propto \mathbf e_3 $ & $\mathcal S^+(\text{III})$  
			\\								
			\T\B $\mathcal S^+(\text{V})$ & LRS Bianchi V, imperfect & 3  & 0  & $\mathbf n_\text{LRS} \propto \mathbf e_1$ & $\mathcal C^+(\text{V})$ \\								
			\T\B $\mathcal S^0(\text{V})$ & open FLRW, perfect, orthogonal & 2  & 0  & (isotropic) & $\mathcal C^0(\text{V})$ \\								
			\T\B $\mathcal S^+(\text{VI}_0)$ & pseudo-LRS Bianchi VI$_0$, imperfect, orthogonal & 3  & 1  & shear axis $\propto \mathbf e_1$ & $\mathcal C^+(\text{VI}_0)$ \\
			\T\B $\mathcal S^0(\text{VI}_0)$ & pseudo-LRS Bianchi VI$_0$, perfect, orthogonal & 3  & 1  & shear axis $\propto \mathbf e_1$  & $\mathcal C^0(\text{VI}_0)$  \\			
			\T\B $\mathcal S^+(\text{VII}_0)$ & LRS Bianchi VII$_0$, similar to $\mathcal S^+(\text{I})$ & 3  & 0  & $\mathbf n_\text{LRS} \propto \mathbf e_1$  & $\mathcal C^+(\text{VII}_0)$  \\								
			\T\B $\mathcal S^0(\text{VII}_0)$ & LRS Bianchi VII$_0$, similar to $\mathcal S^0(\text{I})$ & 3  & 0  & $\mathbf n_\text{LRS} \propto \mathbf e_1$   & $\mathcal C^0(\text{VII}_0)$ \\								
			\T\B $\mathcal S^+(\text{VII}_h)$ & LRS Bianchi VII$_h$, similar to $\mathcal S^+(\text{V})$ & 3  & 0  & (isotropic)  & $\mathcal C^0(\text{VII}_h)$	\\
			\T\B $\mathcal S^0(\text{VII}_h)$ & open FLRW, similar to $\mathcal S^0(\text{V})$ & 2  & 0  & (isotropic)  & $\mathcal C^0(\text{VII}_h)$  \\											
			\hline\hline
		\end{tabular}
	\end{adjustbox}
		\caption{Invariant sets with higher symmetries and properties of the corresponding cosmological model.  $d$ is the dimension of the space of initial data (gauge invariant), which is also the dimension of the corresponding dynamical system. $r$ is the number of unphysical gauge modes. The direction of symmetry axis is indicated, see main text for explicit expressions. } 		
		\label{tab:sets:LRS}
	\end{center}
\end{table}

\subsection{LRS conditions for the imperfect branch \label{sub:LRS_con}}
In order to generalize LRS models to the imperfect matter sector we start by deriving some general conditions common to all higher-symmetry sets $\mathcal S^+(\#)$. In order for a model with $V_a\neq \vec 0$ to be LRS both the dynamics (expansion) and spatial geometry (metric) must be axially symmetric with a common axis aligned with the field strength one-form:
\be
\mathbf n_\text{LRS} \propto V_1 \mathbf e_1 + V_3 \mathbf e_3 \,.
\ee

For concreteness, consider first the case that $\mathbf n_\text{LRS} \propto \mathbf e_1$. The field strength one-form and shear tensor then take the form $V_a=(V_1,\,0,\,0)$ and $\Sigma^a_{\;\;b}=\Sigma_+\cdot\text{diag}(-2,\,1,\,1)$. We immediately conclude that in an arbitrary frame:
\begin{itemize}
	\item $V_a$ is an eigenvector of the traceless matrix $\Sigma^a_{\;\;b}$,
	\item the sum of the three eigenvalues is zero,
	\item there are two distinct eigenvalues, the one associated with $V_a$ has multiplicity 1, and the other one has multiplicity 2.
\end{itemize}
In an arbitrary frame compatible with our gauge choices we obtain, from the above properties, the following LRS conditions for the imperfect branch:
\begin{align}
&\text{LRS conditions for $V_a\neq\vec 0$:}\nonumber\\ & \Sigma_\times = 0 \,, \label{LRS:1}  \\
& 2\Sigma_- V_1 + \Sigma_3 V_3 = 0 \,, \label{LRS:2} \\
& \Sigma_3 V_1 + (\Sigma_-+\sqrt{3}\Sigma_+) V_3 = 0 \,. \label{LRS:3} 
\end{align}
These conditions ensure that $\Sigma^a_{\;\;b}$ is axisymmetric and aligned with $V_a$. Equations (\ref{LRS:1})-(\ref{LRS:3}) are thus necessary conditions for all LRS models with $V_a\neq \vec 0$. The conditions are nevertheless not sufficient conditions for LRS, as exemplified by the pseudo-LRS model considered in section \ref{pseudo}.

\subsection{LRS Bianchi type I \label{sub:LRS_I}}
$\mathcal S^+(\text{I})$ and $\mathcal S^0(\text{I})$ denote the LRS subsets of the Bianchi type I imperfect and perfect branches respectively:
\be
\mathcal S^+(\text{I}) \subset \mathcal D^+(\text{I}) \,, \qquad 
\mathcal S^0(\text{I}) \subset \mathcal D^0(\text{I}) \,.
\ee
Orbits in the imperfect branch $\mathcal C^+(\text{I})$ belong to the LRS subset $\mathcal S^+(\text{I})$ iff the shear tensor is axially symmetric and aligned with $V_a$. The LRS conditions derived in section \ref{sub:LRS_con} are thus sufficient in Bianchi type I. The defining conditions are summarized as
\be
\mathcal S^+(\text{I}): \quad (\ref{type:I})\,,\; (\ref{D+})\,,\; (\ref{LRS:1})\,,\; (\ref{LRS:2})\,,\; (\ref{LRS:3}) \,.
\ee
In $\mathcal S^0(\text{I})$ the matter sector obeys full 3-dimensional rotational invariance so an orbit in $\mathcal D^0(\text{I})$ belongs to $\mathcal S^0(\text{I})$ iff the shear tensor is axially symmetric. The defining condition is thus
\begin{align}
\mathcal S^0(\text{I}): \qquad &\Sigma_{ab} \text{ has minimum two identical eigenvalues} \\
&\text{(LRS Bianchi type I if two, flat FLRW if three)} \nonumber
\end{align} 
on top of the conditions (\ref{type:I}) and (\ref{D0}) for $\mathcal D^0(\text{I})$. Note that if all three eigenvalues are identical, they are necessarily zero since $\Sigma_{ab}$ is trace-free, and, consequently, the model is flat FLRW with a stiff fluid ($\Theta$) in addition to the $\gamma$-law perfect fluid ($\Omega$).

As for the counting of independent modes, first note that $\mathcal S^+(\text{I})$ and $\mathcal S^0(\text{I})$ are defined by 8 and 7 constraint equations, respectively. There remains $r=1$ gauge degree of freedom in $\mathcal S^+(\text{I})$, which represents a frame rotation with respect to $\mathbf e_2$ (this preserves the conditions $V_2=\Sigma_2=\Sigma_\times=0$). In $\mathcal S^0(\text{I})$ there remains $r=2$ gauge modes, for instance $\Sigma_{ab}$ can be diagonalized by setting $\Sigma_\times=\Sigma_3=0$. The number of independent modes are thus (\ref{counting:alg}):
\begin{align}
\mathcal S^+(\text{I}):& \qquad  d = 11 - 8 - 1 = 2 \,, \label{SI:d2} \\
\mathcal S^0(\text{I}):& \qquad  d = 11 - 7 - 2 = 2 \,, \label{SI:d1} 
\end{align}
which accounts for one field strength mode and one shear mode. See section 8.4.1 of \cite{Ben18} for a phase space analysis of $\mathcal S^+(\text{I})$.

\subsection{LRS Bianchi type II \label{sub:LRS_II}}
Let $\mathcal S(\text{II})$ denote the LRS subset of the Bianchi invariant set  $\mathcal B(\text{II})$ defined by equation (\ref{BIS:II}). We find that $\mathcal S(\text{II})$ lays in the subset $\mathcal C^0(\text{II})$, that is, the LRS subset is restricted to the perfect branch:
\be
\mathcal S(\text{II}) = \mathcal S^0(\text{II}) \subset \mathcal C^0(\text{II}) \,, \quad \mathcal S^+(\text{II}) = \emptyset \,.
\ee
These results agree with the analysis in \cite{Mik18} where it was shown that multiple gauge fields are needed in Bianchi type II in order to align the matter axis with the geometrical axis. 

The LRS set is defined by the following conditions:
\begin{align}
& \mathcal S^0(\text{II}): \nonumber \\
& V_1=V_3=0\,, \label{SII:first}\\
& \Sigma_-^2+\Sigma_\times^2 = 3 \Sigma_+^2 \,, \quad \Sigma_3 = 0\,, \label{SII:2} \\ 
& \Sigma_- N_- + \Sigma_\times N_\times = \Sigma_+ N_+ \,, \label{SII:3} \\
& A=0\,, \quad N_+^2 = 3N_-^2 + 3N_\times^2 > 0 \,. \label{SII:last}
\end{align}
The direction of the LRS axis is:   
\be
\mathbf n_\text{LRS} \propto \sin \theta \mathbf e_2 + \cos \theta \mathbf e_3 \,, 
\ee
where $\theta$ is defined by the equation
\be
\Sigma_- = -\sqrt{3} \Sigma_+ \cos (2\theta) \,. 
\ee

To derive the LRS subset one can start by assuming canonical representations where $\mathbf n_\text{LRS}$ is aligned with one of the basis vectors. The possibility $\mathbf n_\text{LRS}\propto\mathbf e_1$ requires $\Sigma_{ab}=\Sigma_+\cdot\text{diag}(-2, 1, 1)$ and $V_a=(V_1, 0, 0)$, which does not define an invariant set in Bianchi type II according to the evolution equation (\ref{Sm}). Thus the LRS axis must be aligned with the orbits of the $G_2$ subgroup. The only possibility for LRS in the imperfect sector is then $\mathbf n_\text{LRS}\propto\mathbf e_3$, but constraint (\ref{C2}) requires $V_3=0$ in Bianchi type II. We conclude that LRS models do not extend into the imperfect sector in Bianchi type II. In the perfect branch, on the other hand, one easily obtain an invariant LRS model by assuming a canonical representation with $\mathbf n_\text{LRS}\propto\mathbf e_3$. By considering all frame rotations compatible with the gauge conditions, one obtain the general defining conditions (\ref{SII:first})-(\ref{SII:last}). In particular, (\ref{SII:2}) ensure that the shear tensor is axisymmetric (two equal eigenvalues) with axis normal to $\mathbf e_1$, whereas (\ref{SII:3}) ensure that the geometrical axis is aligned with the expansion axis.  

Note that the defining conditions permit frame rotations with respect to $\mathbf e_1$ and $\mathbf n_\text{LRS}$. The dimension of the gauge orbit is $r=1$. Taking into account $c=7$ constraints the dimension of the set is: 
\be
\mathcal S^0(\text{II}): \quad d=11-7-1=3 \,. 
\ee
The three modes are associated with the stiff fluid, shear and spatial curvature.

\subsection{LRS Bianchi type III \label{sub:LRS_III}}
In Bianchi type III we find LRS subsets in the imperfect branch as well as in the perfect branch:
\be
\mathcal S^+(\text{III}) \subset \mathcal D^+(\text{III}) \,, \qquad 
\mathcal S^0(\text{III}) \subset \mathcal C^0(\text{III}) \,.
\ee
The defining conditions for the imperfect branch are: 
\begin{align}
&\mathcal S^+(\text{III}):  \nonumber \\
& V_3>0 \,, \quad \Theta=V_1=0\,, \label{def:S+III:1} \\  
& \Sigma_-+\sqrt 3 \Sigma_+ = 0 \,, \quad \Sigma_\times = \Sigma_3 = 0\,, \label{def:S+III:2} \\
& A = \sqrt 3 N_\times > 0 \,, \label{def:S+III:3} \\
&  N_+ = N_- = 0 \,.  \label{def:S+III:4} 
\end{align}
Note that the LRS axis is along the third basis vector: $\mathbf n_\text{LRS}\propto\mathbf e_3$. In invariant terms, the LRS axis is geometrically restricted to be tangent to the orbits of the $G_2$ subgroup, i.e. to be in the span of $\mathbf e_2$ and $\mathbf e_3$, and further restricted by our gauge choice $V_2=0$ to the $\mathbf e_3$ direction. Note that $N_{ab}$ is necessarily trace-free in the imperfect branch: $N^a_{\;\;a}=2N_+=0$.  This follows from the LRS condition (\ref{LRS:1}) in conjunction with the evolution equation (\ref{Sc}) and constraint (\ref{C2}). The subset $\mathcal S^+(\text{III})$ can be derived by combining the LRS conditions (\ref{LRS:1})-(\ref{LRS:3}) with the defining conditions for $\mathcal D^+(\text{III})$. By assuming $\Theta=0$ the defining conditions above follow directly. The case $\Theta\neq 0$ would imply energy flow along the LRS axis $\mathbf n_\text{LRS}$, which is allowed by the LRS symmetry, but in disagreement with constraint equation (\ref{C5}). Conditions (\ref{def:S+III:1})-(\ref{def:S+III:4}) can also be verified by starting with the LRS Bianchi type III metric in a coordinate approach. 

As for the perfect branch the defining conditions are: 
\begin{align}
&\mathcal S^0(\text{III}):  \nonumber \\
& V_1=V_3=0 \,, \label{def:S+III:first} \\   
& \Sigma_-^2 + \Sigma_\times^2 = 3\Sigma_+^2 \,, \quad \Sigma_3=0 \,, \\
& \Sigma_- N_- + \Sigma_\times N_\times = \Sigma_+ N_+ \,, \\
&  \Sigma_+ A + \Sigma_- N_\times - \Sigma_\times N_- = 0 \,, \quad A>0 \,.  
\end{align}
The direction of the LRS axis is again geometrically restricted to the tangent space of the $G_2$ subgroup:   
\be
\mathbf n_\text{LRS} \propto \sin \theta \mathbf e_2 + \cos \theta \mathbf e_3 \,, 
\label{S0:LRS-axis:1}
\ee
where $\theta$ is defined by the equation
\be
\Sigma_- = -\sqrt{3} \Sigma_+ \cos (2\theta) \,. \label{S0:LRS-axis:2}
\ee

As for the gauge fixing we first note that since $A>0$, the only possible frame transformation is a rotation around $\mathbf e_1$. In $\mathcal S^+(\text{III})$ the gauge is fixed uniquely, $r=0$, since such a frame rotation is incompatible with the gauge fixing condition $V_2=0$. In $\mathcal S^0(\text{III})$ a frame rotation around $\mathbf e_1$ is possible: $r=1$. Counting restrictions we find $c=8$ for $\mathcal S^+(\text{III})$ and $c=6$ for $\mathcal S^0(\text{III})$. The dimensions are thus:
\begin{align}
\mathcal S^+(\text{III}):& \quad  d = 11 - 8 - 0 = 3 \,, \\
\mathcal S^0(\text{III}):& \quad  d = 11 - 6 - 1 = 4 \,,
\end{align}
which accounts for one field strength mode, one shear mode and one spatial curvature mode. The ``extra'', fourth, mode in $\mathcal S^0(\text{III})$ accounts for the trace of $N_{ab}$, which is not necessarily zero in the perfect branch.   To see the correspondence between the two sets, consider the defining conditions of $\mathcal S^0(\text{III})$ in a diagonal shear frame where $\Sigma_\times=0$. In that case $\mathbf n_\text{LRS}$ is aligned with $\mathbf e_2$ or $\mathbf e_3$. In the latter case  the defining conditions for $\mathcal S^0(\text{III})$ match those of $\mathcal S^+(\text{III})$ with $V_3 \rightarrow 0$, but with $\Theta$ and $N_+=-\sqrt{3} N_-$ being free in the perfect branch. Note that $N_+=-\sqrt{3} N_- \neq 0$ is possible only in the perfect branch according to constraint (\ref{C2}).

\subsection{LRS Bianchi type V \label{sub:LRS_V}}

In Bianchi type V we find an LRS subset in the imperfect branch: 
\be
\mathcal S^+(\text{V}) \subset \mathcal C^+(\text{V}) \,, \qquad 
\ee
which is defined by the following conditions: 
\begin{align}
\mathcal S^+(\text{V}): \quad &V_1^2>0\,, \;\; V_3=0\,, \;\; \Sigma_-=\Sigma_\times=\Sigma_3=0 \,,  \;\; A > 0, \;\;  N_+ = N_-= N_\times = 0\,, \\ 
 &A\Sigma_+=\Theta V_1 \,. \label{SV+:2}
\end{align}
This subset is derived from the defining conditions for $\mathcal C^+(\text{V})$ and the LRS conditions (\ref{LRS:1})-(\ref{LRS:3}). The LRS axis is $\mathbf n_\text{LRS} \propto \mathbf e_1$.  There is energy flux along the LRS axis iff $\Theta\neq0$. 

As for the counting of modes, $A>0$ implies that the only possible frame rotation is around the LRS axis. The state vector is invariant under such rotations so the gauge orbits have dimension $r=0$.  Counting constraints we find $c=8$. Thus the dimension of the set is:
\be
\mathcal S^+(\text{V}): \quad d=11-8-0 = 3\,.
\ee  
There are two variables associated with the field strength ($\Theta$, $V_1$), one with shear ($\Sigma_+$) and one with spatial curvature ($A$). The four variables are related by (\ref{SV+:2}), thereby leaving three independent modes. 

As for the perfect matter sector the LRS subset of the perfect branch
\be
\mathcal S^0(\text{V}) \subset \mathcal C^0(\text{V}) \,, \qquad 
\ee
is actually the open FLRW model. The defining conditions are written down in section \ref{sub:FLRW}. To see this note that (\ref{C3}), (\ref{type:V}) and (\ref{def:C0}) imply $\Sigma_+=\Sigma_3=0$. The eigenvalues of $\Sigma_{ab}$ are then:
\be
0, \pm \sqrt{3} \sqrt{\Sigma_-^2+\Sigma_\times^2} \,.
\ee
Two of the eigenvalues can only be equal if $\Sigma_{ab}=0$. Thus LRS symmetry implies shear isotropy in the case of Bianchi type V with a perfect fluid. As for the geometry, Bianchi type V spaces are maximally symmetric hyperbolic 3-spaces, and thus shear-free Bianchi type V cosmological models correspond to the open FLRW models.

\subsection{LRS Bianchi type VII \label{sub:LRS_VII}}
We consider class A models ($A=0$) and class B models ($A>0$), separately.  As for class A, the LRS models of Bianchi type VII$_0$ are spatially flat and similar to LRS Bianchi type I models \cite{bok:MacCallum79}. As for class B, LRS models of Bianchi type VII$_h$ have hyperbolic spatial sections and are thus similar to LRS Bianchi type V models \cite{Hewitt-Wainwright:1993}. Thus, the models derived below are type VII representations of the models considered in sections \ref{sub:LRS_I} and \ref{sub:LRS_V}. 
  
\paragraph{Class A} We first consider class A models in which we find LRS models in the imperfect branch as well as in the perfect branch:
\be
\mathcal S^+(\text{VII}_0) \subset \mathcal C^+(\text{VII}_0) \,, \quad \mathcal S^0(\text{VII}_0) \subset \mathcal C^0(\text{VII}_0) \,.
\ee
They are defined by the following conditions:
\begin{align}
&\mathcal S^+(\text{VII}_0): \quad (V_1)^2>0\,, \quad \Theta=0 \,, \quad (\ref{SVII0:common}) \,, \\ &\mathcal S^0(\text{VII}_0): \;\; \quad V_1=0\,, \quad (\ref{SVII0:common}) \,,    
\end{align}
where 
\be
(N_+)^2 > 0\,, \quad A=N_-=N_\times=V_3=\Sigma_-=\Sigma_\times=\Sigma_3=0 \,.
\label{SVII0:common}
\ee 
The LRS axis is normal to the orbits of the $G_2$ subgroup:
\be
\mathbf n_\text{LRS} \propto \mathbf e_1 \,.
\ee
The defining conditions for the imperfect branch $\mathcal S^+(\text{VII}_0)$ follow from the LRS conditions (\ref{LRS:1})-(\ref{LRS:3}) in conjunction with the evolution equations (\ref{Sm}) and (\ref{Sc}) for  $\Sigma_-$ and $\Sigma_\times$ and the defining conditions for $\mathcal C^+(\text{VI}_0)$. 
The defining conditions for the perfect branch $\mathcal S^0(\text{VII}_0)$ are obtained by letting $\Theta$ be free and setting $V_1\rightarrow 0$ in $\mathcal S^+(\text{VII}_0)$. This is in fact the only representation of $\mathcal S^0(\text{VII}_0)$ since the solution is invariant under a frame rotation around $\mathbf e_1$ (LRS axis), whereas a frame rotation around $\mathbf e_2$ or $\mathbf e_3$ gives  $N_{1a}\neq \vec{0}$, which is incompatible with our gauge choice. 

As for the counting, it follows from the last paragraph above that $r=0$ for both sets. Counting constraints we find $c=8$ for both sets. The dimension of the space of initial data is thus:
\begin{align}
\mathcal S^+(\text{VII}_0):& \quad  d = 11 - 8 - 0 = 3 \,, \\
\mathcal S^0(\text{VII}_0):& \quad  d = 11 - 8 - 0 = 3 \,.
\end{align}
This is one higher than LRS models of Bianchi type I, which accounts for the presence of the variable $N_+$, which in this case is unphysical and whose only role is to define the type VII$_0$ algebra.\footnote{The Bianchi classification, which is based on the Lie algebra of the Killing vectors, is not a solution to the equivalence problem \cite{bok:EllisMaartensMacCallum}.}  

\paragraph{Class B} In class B we consider first the imperfect branch in which we find an LRS subset  
\be
\mathcal S^+(\text{VII}_h) \subset \mathcal C^+(\text{VII}_h) \,, 
\ee
defined by the following conditions:
\begin{align}
\mathcal S^+(\text{VII}_h): \quad &V_1^2>0\,, \quad A^2=hN_+^2>0 \,, \quad \Theta V_1 = \Sigma_+ A \,, \\ 
& V_3 = \Sigma_- = \Sigma_\times = \Sigma_3 = N_-=N_\times = 0 \,.
\end{align}
As in class A the LRS axis is along $\mathbf e_1$. To derive the LRS subset, start with the defining conditions for the superset  $S^+(\text{VII}_h)$. It follows from the LRS conditions (\ref{LRS:1})-(\ref{LRS:3}) that $\Sigma_-=0$ and $\Sigma_\times=0$. In order for these conditions to be preserved in time it follows from the evolution equations (\ref{Sm}) and (\ref{Sc})  for $\Sigma_-$ and $\Sigma_\times$ that 
\be
\begin{bmatrix}
	A & N_+  \\
	-N_+ & A  	
\end{bmatrix}  
\begin{bmatrix}
	N_-  \\
	N_\times  	
\end{bmatrix}  
= \begin{bmatrix}
	0 \\
	0   	
\end{bmatrix} \,,
\ee  
which implies $N_-=N_\times=0$ since the determinant $A^2+N_+^2>0$. The counting of degrees of freedom is similar as for $\mathcal S^+(\text{V})$:
\begin{align}
\mathcal S^+(\text{VII}_h):& \quad  d = 11 - 8 - 0 = 3 \,. 
\end{align}
Note from the evolution equations that $\mathcal S^+(\text{VII}_h)$ and $\mathcal S^+(\text{V})$ evolve equivalently, as they should. Both posses energy-flux along the LRS axis iff $\Theta\neq0$.

As for the perfect branch the LRS subset of $\mathcal C^0(\text{VII}_h)$ is the open FLRW model. The defining conditions are given in subsection \ref{sub:FLRW}, and are derived by taking the limit $V_1\rightarrow 0$ of $\mathcal S^+(\text{VII}_h)$.

\subsection{FLRW models \label{sub:FLRW}}
State space contains flat and open FLRW models, whereas closed FLRW models belong to Bianchi type IX and thus fall outside our considered range of models. 

Spatially flat LRS models have representations in Bianchi type VII$_0$ as well as in type I. Thus there are two representations in state space for flat FLRW models, which are the shear-free subsets of $\mathcal S^0(\text{I})$ and $\mathcal S^0(\text{VII}_0)$:
\begin{align}
\mathcal S^0(\text{I})|_{\Sigma_{ab}=0}:& \quad N_+=0\,, \quad \Omega+\Theta^2=1 \quad (\text{flat FLRW}) \,, \\
\mathcal S^0(\text{VII}_0)|_{\Sigma_{ab}=0}:& \quad N_+^2>0\,, \quad \Omega+\Theta^2=1 \quad (\text{flat FLRW}) \,. 
\end{align}

As for open FLRW there are also two distinct representations in state space. These are the LRS subsets of $\mathcal C^0(\text{V})$ and $\mathcal C^0(\text{VII}_h)$: 
\begin{align}
\mathcal S^0(\text{V}):& \quad A>0 \,, \quad N_+=0\,, \quad\Omega+\Theta^2+A^2=1 \quad (\text{open FLRW})\,, \\
\mathcal S^0(\text{VII}_h):& \quad A^2 = hN_+^2 > 0  \,,\, \qquad \Omega+\Theta^2+A^2=1 \quad (\text{open FLRW}) \,.
\end{align}
The three-dimensional Ricci scalar is ${^3}\!R=-6a^2$ in both representations. 

Due to isotropy frame rotations leave all state vectors invariant: $r=0$. Upon counting constraints we find $c=10$ in flat FLRW, and $c=9$ in both sets of open FLRW. Thus the dimensions are:
\begin{align}
\mathcal S^0(\text{I})|_{\Sigma_{ab}=0}:& \quad  d = 11 - 10 - 0 = 1 \quad (\text{flat FLRW}) \,, \\
\mathcal S^0(\text{VII}_0)|_{\Sigma_{ab}=0}:& \quad  d = 11 - 9 - 0 = 2 \quad (\text{flat FLRW}) \,, \\
\mathcal S^0(\text{V}):& \quad  d = 11 - 9 - 0 = 2 \quad (\text{open FLRW}) \,, \\
\mathcal S^0(\text{VII}_h):& \quad  d = 11 - 9 - 0 = 2 \quad (\text{open FLRW}) \,.
\end{align}
In all cases one mode is associated with the stiff fluid ($\Theta$). In $\mathcal S^0(\text{VII}_0)|_{\Sigma_{ab}=0}$ the variable $N_+$, which defines the Lie algebra, represents an unphysical mode. In open FLRW the second mode is physical and associated with the spatial curvature.

\section{Expansion symmetries \label{ch:ex-sym}}
In models with higher symmetries, as seen above, the metric, shear and energy-momentum tensors share a common rotational symmetry of dimension 1 or 3, which correspond to LRS and FLRW models, respectively. One may now question: 
\begin{enumerate}
	\item Are models with axisymmetric expansion restricted to LRS models? 
	\item Are models with isotropic expansion restricted to FLRW models? 
\end{enumerate}
It is easy to believe, perhaps, that the answers are \emph{yes} and \emph{yes}, since the shear evolution equations are sourced by matter anisotropies and geometrical anisotropies.  

As for the first question, the answer is actually \emph{no} even in orthogonal Bianchi models containing only a perfect fluid, as pointed out already in \cite{main:Ellis1969}. In subsection \ref{pseudo} we show that this type of model, which we refer to as 'pseudo LRS', extends into the imperfect matter sector. 

As for the second question, the answer is \emph{yes} only if attention is restricted to perfect fluid models. In models with imperfect matter, on the other hand, the anisotropic stress may actually cancel the anisotropic curvature allowing for a shear-free normal congruence, as first pointed out in \cite{main:Mimoso93}. Such models have been further developed and studied in \cite{main:Coley94,main:Coley94nr2,main:Carneiro01, main:Koivisto11,pert:Pereira12,shear-free:Abebe16,pert:Pereira17,Mik18,Pailas:2019}. In subsection \ref{sub:shear-free} we show that models, beyond FLRW, with shear-free normal congruence belong uniquely to the set $\mathcal S^+(\text{III})$.

\subsection{Pseudo-LRS Bianchi type VI$_0$ \label{pseudo}}
From the study of Bianchi models with a perfect fluid \cite{main:Ellis1969} it is known that Bianchi type VI$_0$ with trace-free matrix $n_{ab}$ is the unique case of a non-LRS cosmological model that allows expansion symmetry relative to a fixed axis. Since the geometry of spatial sections break the axial symmetry, we refer to such models as `pseudo-LRS'. We shall see that the model extends into the imperfect branch given that the energy-momentum tensor (which is also axisymmetric) is aligned with the expansion symmetry. 

The invariant sets corresponding pseudo-LRS models of Bianchi type VI$_0$ are denoted by $\mathcal S^+(\text{VI}_0)$ for the imperfect branch and by $\mathcal S^0(\text{VI}_0)$ for the perfect branch. They are subsets of the Bianchi type VI$_0$ invariant sets defined in section \ref{ch:invariant}:
\be
\mathcal S^+(\text{VI}_0) \subset \mathcal C^+(\text{VI}_0) \,, \quad \mathcal S^0(\text{VI}_0) \subset \mathcal C^0(\text{VI}_0) \,,
\ee
and defined by the following conditions:
\begin{align}
&\mathcal S^+(\text{VI}_0): \quad (V_1)^2>0 \,, \quad \Theta=V_3=0 \,, \quad (\ref{S0:common}) \,, \\
&\mathcal S^0(\text{VI}_0): \;\; \quad V_1=V_3=0\,, \quad (\ref{S0:common}) \,, 
\end{align}
where the common conditions are
\be
\Sigma_-=\Sigma_\times=\Sigma_3=0\,, \quad N_\times^2 + N_-^2 > 0\,, \quad A=N_+=0 \,.
\label{S0:common}
\ee
In the imperfect branch the axis of the shear-tensor is aligned with the field strength one-form:  
\be
\Sigma_{ab} = \Sigma_+ \cdot \text{diag}(-2,\;1,\;1) \,, \quad V_a = (V_1\,, 0 \,, 0) \,.
\ee 

To derive the defining conditions above it is useful to note that the condition $N_+=0$ (trace-free $n_{ab}$) is preserved in time iff 
\be
\Sigma_- N_- + \Sigma_\times N_\times = 0 \,. 
\label{S0:useful}
\ee
The defining conditions for the imperfect branch $\mathcal S^+(\text{VI}_0)$ follows by considering (\ref{S0:useful}) in conjunction with the defining conditions for $\mathcal C^+(\text{VI}_0)$ and conditions (\ref{LRS:1})-(\ref{LRS:3}). The defining conditions for the perfect branch $\mathcal S^0(\text{VI}_0)$ follows directly from (\ref{S0:useful}) in conjunction with the defining conditions for $\mathcal C^0(\text{VI}_0)$. 

As for the gauge fixing we note that the defining conditions are preserved by frame rotations around $\mathbf e_1$, i.e. the symmetry axis of the shear tensor. This is a non-trivial transformation that rotates the spin-2 object $[N_-, N_\times]^T$ associated with spatial curvature. Thus $r=1$ in both $\mathcal S^+(\text{VI}_0)$ and $\mathcal S^0(\text{VI}_0)$. We count $c=7$ constraints in both sets. The dimension of the sets are thus:
\begin{align}
\mathcal S^+(\text{VI}_0):& \quad  d = 11 - 7 - 1 = 3 \,, \\
\mathcal S^0(\text{VI}_0):& \quad  d = 11 - 7 - 1 = 3 \,,
\end{align}
which accounts for one field strength mode, one shear mode and one spatial curvature mode.

\subsection{Shear-free Bianchi type III \label{sub:shear-free}}
Naively one expects that models with a shear-free normal congruence are restricted to FLRW models. However, this is not necessarily the case in models with imperfect matter in which anisotropic stress ($\pi_{ab}$) may cancel anisotropic curvature ($^3S_{ab}$) on the right-hand side of the shear evolution equation (\ref{eq:shear-ev}), as first pointed out in \cite{main:Mimoso93}. As mentioned in the introduction, a physical realization with a free massless scalar field was found in the locally rotationally symmetric (LRS) Bianchi type III metric by Carneiro and Marug\'an  \cite{main:Carneiro01}. This solution represents a peculiar corner of Bianchi models which had not been systematically explored. It is natural to ask how common such types of solutions are. Motivated by this question, in \cite{Mik18} I considered the space of all spatially homogeneous cosmological models (the Kantowski-Sachs metric in addition to Bianchi types I-IX) with a matter sector containing $n$ independent, non-tilted, $p$-form gauge fields ($p\in\{0,1,2,3\}$) with action on the form (\ref{action2}). In this setup it was proved that 1) the LRS Bianchi type III solution is the only spatially homogeneous metric that admits general relativistic solutions with a shear-free normal congruence given that the energy density of the gauge field is positive and 2) this can be realized only by the cases $p=0$ and $p=2$ which are equivalent and correspond to a free massless scalar field. These results hold for all $n$. Recently, Pailasa and Christodoulakis considered the case of an electromagnetic field ($p=1$) interacting with a charged perfect matter fluid (beyond the assumptions considered in \cite{Mik18}) and showed that a shear-free normal congruence is possible in Bianchi type III, VIII and IX \cite{Pailas:2019}. 

Below, equipped with the dynamical system (\ref{O})-(\ref{C6}), the results of \cite{Mik18} for the case of a single ($n=1$) $p$-form gauge field, with $p\in\{0,2\}$, will be reproduced. This gives an independent verification based on the orthonormal frame approach, whereas the metric approach was employed in \cite{Mik18}. Furthermore, we establish the neighbourhood of the resulting shear-free set within state space $D$. Remind that the analysis here excludes Bianchi models of type VIII and IX, but in these cases the only allowed component of the field strength one-form is the temporal component, as shown in \cite{Ben18}. Specifically, constraint equations give $V_a=0$ for $a=1,2,3$ in Bianchi type VIII and IX, which means that the scalar field corresponds to a non-tilted stiff fluid with vanishing anisotropic stress. This excludes the possibility of shear-free solutions in type VIII and IX, beyond the FLRW subclass of the latter, with our considered matter model.

In order to find all solutions with shear-free normal congruence we set $\Sigma_{ab}=0$ in the dynamical system (\ref{O})-(\ref{C6}). The shear evolution equations (\ref{Sp})-(\ref{S3}) then turn into  constraint equations. Equations (\ref{S3}), (\ref{C3}) and (\ref{C5}) give: 
\begin{equation}
\Theta V_1 = 0\,, \quad \Theta V_3 = 0\,, \quad V_1 V_3 = 0\,.
\end{equation}
Thus two of the variables $\{\Theta, V_1, V_3 \}$ must vanish. If $V_1=V_3=0$ the matter sector is isotropic and there is no shear-free solution beyond FLRW models. If $\Theta=V_3=0$ we have an unused gauge degree of freedom that we fix by choosing $N_\times=0$, i.e. diagonalizing $N_{ab}$. Equation (\ref{Sp}) gives 
\be
N_-^2+V_1^2=0 \, .
\ee 
Again the matter sector is isotropic and, hence, there is no shear-free solution beyond FLRW models. Finally, consider the case $\Theta=V_1=0$. A shear-free solution beyond FLRW requires $\Sigma_{ab}=0$ and $V_3^2>0$. This gives a unique shear-free solution denoted by $\mathcal S_{\text{SF}}^+(\text{III})$ which is a subset of LRS Bianchi III model:
\be
\mathcal S_{\text{SF}}^+(\text{III}) \subset \mathcal S^+(\text{III}) \,.
\ee
The defining conditions follows directly from constraints (\ref{Sm}), (\ref{Sc}), (\ref{C1}), (\ref{C2}), (\ref{C6}): 
\begin{align}
&\mathcal S_{\text{SF}}^+(\text{III}):\nonumber\\
& V_3=\sqrt{2}N_\times>0 \,, \quad \Theta=V_1=0\,, \\
&\Sigma_+=\Sigma_-=\Sigma_\times=\Sigma_3=0\,, \\
&A = \sqrt{3} N_\times>0\,, \quad N_+=0\,, \quad N_-=0 \,.
\end{align}
Note that the gauge is fixed uniquely ($r=0$), as it is for the superset $\mathcal S^+(\text{III})$. With $c=10$ constraints the dimension of the shear-free set is:
\be
\mathcal S_{\text{SF}}^+(\text{III}):\quad  d = 11 - 10 - 0 = 1 \,.
\ee
Note that the Hamiltonian constraint (\ref{C6}) can be written as
\be
\Omega+6N_\times^2= 1 \,,
\ee
whereas the remaining constraint equations are identities.
The deceleration parameter reduces to
\be
q=(\tfrac{3}{2}\gamma-1)\Omega \,,
\ee
so the dynamics is equivalent to a an open FLRW solution with effective spatial curvature $\Omega_\text{eff}=6N_\times^2>0$. 

We have thus proved the following theorem: 
\begin{theorem}
	Shear-free solutions of the system (\ref{O})-(\ref{C6}) beyond FLRW models belong uniquely to the LRS Bianchi type III model $\mathcal S^+_\text{SF}(\text{III})$ and are dynamically equivalent to open FLRW models. 
	\label{thm:exact}
\end{theorem}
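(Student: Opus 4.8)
The plan is to impose the shear-free condition $\Sigma_{ab}=0$ directly on the dynamical system (\ref{O})-(\ref{C6}) and extract the algebraic constraints this forces. The crucial observation is that once $\Sigma_+=\Sigma_-=\Sigma_\times=\Sigma_3=0$, the four shear evolution equations (\ref{Sp})-(\ref{S3}) no longer carry any dynamics: their left-hand sides vanish identically, so they collapse into algebraic constraints on the remaining variables. First I would combine the $\Sigma_3$-equation (\ref{S3}) with the constraints (\ref{C3}) and (\ref{C5}), all evaluated at $\Sigma_{ab}=0$, to obtain the three bilinear relations
\be
\Theta V_1 = 0\,, \quad \Theta V_3 = 0\,, \quad V_1 V_3 = 0\,.
\ee
These force at least two of the triple $\{\Theta,V_1,V_3\}$ to vanish, reducing the problem to a finite case analysis.

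Second, I would dispose of the two cases that collapse to FLRW. If $V_1=V_3=0$ the field-strength one-form has no spatial part, the matter sector is rotationally invariant, and the model is an FLRW model — hence not ``beyond FLRW''. If instead $\Theta=V_3=0$ with $V_1\neq0$, I would first use the residual freedom of a frame rotation about $\mathbf e_1$ to diagonalise $N_{ab}$ (set $N_\times=0$); equation (\ref{Sp}) at $\Sigma_{ab}=0$ then reads $N_-^2+V_1^2=0$, forcing $V_1=0$ as well, so again the matter sector is isotropic and the model is FLRW. The only surviving branch is $\Theta=V_1=0$ together with $V_3^2>0$.

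Third, in this surviving branch I would read off the remaining content of the constraint and (now algebraic) shear equations. Equations (\ref{Sm}) and (\ref{Sc}) evaluated at $\Sigma_{ab}=0$, together with the Maxwell constraints (\ref{C1}) and (\ref{C2}), should pin down the geometric variables to $N_+=N_-=0$ and $A=\sqrt{3}N_\times>0$, while the remaining shear equation fixes the relation between $V_3$ and $N_\times$. This exactly reproduces the defining conditions of $\mathcal S_{\text{SF}}^+(\text{III})$, which I would then identify as a subset of $\mathcal S^+(\text{III})$ by comparison with the LRS-III conditions derived earlier, establishing the ``belongs uniquely to LRS Bianchi type III'' claim.

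Finally, for the dynamical-equivalence assertion I would substitute the defining conditions into the definition (\ref{def:sigma}) of $q$ and into the Hamiltonian constraint (\ref{C6}); since all shear and most field-strength variables vanish, $q$ reduces to $(\tfrac{3}{2}\gamma-1)\Omega$ and (\ref{C6}) becomes $\Omega+6N_\times^2=1$, which is formally the Friedmann equation of an open FLRW model with effective curvature $6N_\times^2$. I expect the main obstacle to be the bookkeeping in the intermediate case $\Theta=V_3=0$: one must argue carefully that the gauge freedom used to set $N_\times=0$ is genuinely available — i.e. compatible with the gauge conditions (\ref{gauge:R})-(\ref{gauge:S2}) already imposed — before invoking (\ref{Sp}), and one must confirm that the remaining constraints (\ref{C1})-(\ref{C5}) are identically satisfied on the final set rather than imposing further restrictions.
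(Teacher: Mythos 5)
Your proposal is correct and follows essentially the same route as the paper's own proof: imposing $\Sigma_{ab}=0$ to turn (\ref{Sp})--(\ref{S3}) into algebraic constraints, deriving $\Theta V_1=\Theta V_3=V_1V_3=0$ from (\ref{S3}), (\ref{C3}), (\ref{C5}), running the identical three-case analysis (including the gauge choice $N_\times=0$ and the consequence $N_-^2+V_1^2=0$ in the $\Theta=V_3=0$ branch), and concluding with $\Omega+6N_\times^2=1$ and $q=(\tfrac{3}{2}\gamma-1)\Omega$ for the open-FLRW equivalence. The cautionary points you flag (availability of the rotation about $\mathbf e_1$, and checking that the remaining constraints become identities) are exactly the details the paper handles.
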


In a follow-up work the dynamical stability of points in $\mathcal S^+_\text{SF}(\text{III})$ with respect to spatially homogeneous perturbations will be investigated. Since $\mathcal S^+_\text{SF}(\text{III})$ is a subset of the LRS set $\mathcal S^+(\text{III})$, a relevant question is: can a point $X\in \mathcal S^+_\text{SF}(\text{III})$ be perturbed into a non-LRS state vector? More generally, what type of cosmological models can $X+\delta X$ fall into if $X\in \mathcal S^+_\text{SF}(\text{III})$ and $\delta X$ is a spatially homogeneous perturbation? Here $\delta X$ is an infinitesimal displacement such that $X+\delta X$ satisfies constraint equations (\ref{C1})-(\ref{C6}). The answer to these questions are established by the following result:
\begin{lemma}
Let $X\in\mathcal S^+_\text{SF}(\text{III})$. Then  $X + \delta X \in \mathcal D^+(\text{III})$ for all spatially homogeneous perturbations $\delta X$ described by the dynamical system (\ref{O})-(\ref{C6}).  
	\label{lemma:neighbourhood}
\end{lemma}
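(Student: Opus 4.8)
The plan is to verify the defining conditions of $\mathcal{D}^+(\text{III})$ — that is (\ref{def:D+III}) together with (\ref{common:DIII1}) and (\ref{common:DIII2}) — directly on the displaced state $X+\delta X$, rather than linearizing the flow (\ref{O})-(\ref{Nx}). The organizing observation is that at any $X\in\mathcal{S}^+_\text{SF}(\text{III})$ the quantities $V_3$, $A$ and $N_\times$ are strictly positive (indeed $V_3=\sqrt{2}N_\times$ and $A=\sqrt{3}N_\times$ with $N_\times>0$). These are open conditions, so they persist under any sufficiently small $\delta X$; in particular $V_3>0$ at $X+\delta X$, which already secures the imperfect-branch requirement $V_1^2+V_3^2>0$.

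First I would exploit the strict inequality $V_3>0$ to make the constraints carrying a factor of $V_3$ bite. Since $X+\delta X$ lies on the constraint surface by hypothesis and $V_3\neq0$, the constraints $C_1=V_3(A-\sqrt{3}N_\times)=0$ and $C_2=V_3(N_+-\sqrt{3}N_-)=0$ force $A=\sqrt{3}N_\times$ and $N_+=\sqrt{3}N_-$, which are exactly (\ref{common:DIII1}) and (\ref{common:DIII2}); via (\ref{def:h}) these also pin the group parameter to $h=-1$, i.e.\ Bianchi type III. Next I would substitute $A=\sqrt{3}N_\times$ into $C_5=\Sigma_3(N_\times-\sqrt{3}A)-2\Theta V_3=0$, which collapses to $N_\times\Sigma_3+\Theta V_3=0$, the remaining defining equation of $\mathcal{D}^+(\text{III})$. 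With $C_3=0$ holding by hypothesis, every condition in the definition is met, so $X+\delta X\in\mathcal{D}^+(\text{III})$.

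The step I expect to matter most is conceptual rather than computational: recognizing that no part of the argument touches the linearized evolution, so that the entire lemma is carried by the constraint equations together with the strict positivity of $V_3$ at $X$. The natural worry is that a constraint-preserving displacement might escape into a neighbouring invariant set — type I or II, which demand $A=0$, or the exceptional type $\text{VI}_{-1/9}$, which demands $N_\times=\sqrt{3}A$ instead of $A=\sqrt{3}N_\times$. I would rule these out by the same openness: $A$ is bounded away from $0$ at $X$, and $A=\sqrt{3}N_\times$ is incompatible with $N_\times=\sqrt{3}A$ whenever $N_\times\neq0$, so an infinitesimal $\delta X$ cannot reach any of them. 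The only residual care is bookkeeping — checking that $C_4=\Sigma_3(N_+-\sqrt{3}N_-)=0$ becomes automatic once $N_+=\sqrt{3}N_-$ is in hand, so that it imposes nothing beyond (\ref{common:DIII2}).
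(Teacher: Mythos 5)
Your proposal is correct and follows essentially the same route as the paper: the paper's proof also takes a small neighbourhood of a point in $\mathcal S^+_\text{SF}(\text{III})$ (an open ball of radius $r=|V_3|$, which is exactly your openness/strict-positivity observation) and notes that constraints (\ref{C1})--(\ref{C5}) then force the defining conditions (\ref{def:D+III}) of $\mathcal D^+(\text{III})$. Your write-up merely makes explicit the constraint bookkeeping ($C_1$, $C_2$ yielding (\ref{common:DIII1})--(\ref{common:DIII2}), $C_5$ collapsing to $N_\times\Sigma_3+\Theta V_3=0$, $C_4$ becoming automatic) that the paper leaves implicit, so there is no substantive difference in approach.
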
  
\begin{proof}
	Let $\mathbb R^{11}$ denote the extended state space \emph{not} restricted by constraint equations (\ref{C1})-(\ref{C5}) nor the unit ball (\ref{inequality}). As usual $D$ denotes the physical state space, as defined in (\ref{D:def}). Let $\mathcal U$ be an open ball in $\mathbb R^{11}$ with radius $r$ centred at any point in $\mathcal S^+_\text{SF}(\text{III})$. By choosing $r$ sufficiently small ($r=|V_3|$ suffices) it follows from (\ref{C1})-(\ref{C5}) that $\mathcal U \cap D$ satisfies the defining conditions (\ref{def:D+III}) and thus belongs to $\mathcal D^+(\text{III})$. That is $\mathcal U \cap D = \mathcal U \cap \mathcal D^+(\text{III}) \subset \mathcal D^+(\text{III})$ or $\mathcal U \cap \mathcal (D^+(\text{III}))^c=\emptyset$, where $c$ denotes the complement.
\end{proof}

Informally, perturbations around points in $\mathcal S^+_\text{SF}(\text{III})$ always fall into $\mathcal D^+(\text{III})$. As showed, this restriction follows directly from constraint equations (\ref{C1})-(\ref{C5}). Note that the LRS subset $S^+(\text{III})$ accommodates only a part of the perturbations; the neighbourhood of $\mathcal S^+_\text{SF}(\text{III})$ includes both LRS and non-LRS type perturbations.  

\section{Concluding remarks \label{ch:summary}}

In this paper, we systematically investigated the overall structure of the space of Bianchi type I-VII$_h$ cosmological models containing a non-tilted $\gamma$-law perfect fluid and a free and massless scalar field with a spatially homogeneous gradient $\nabla_\mu \varphi$, that generally breaks the isotropy of spatial sections. Our analysis is complete for the imperfect branch since in Bianchi type VIII and IX the ``Maxwell equations'' enforce a perfect energy-momentum tensor. The scalar field is equivalent to a $2$-form gauge field $\boldsymbol{\mathcal A}$ with action of the type (\ref{action2}), whose energy-momentum tensor is conserved. We obtained all possibilities allowed by the field equations in this setup and classified them with respect to the Bianchi type, higher symmetries and the matter content. The main results are summarized in tables \ref{tab:sets:C}, \ref{tab:sets:D} and \ref{tab:sets:LRS}. Below we give a summary of our work and guide how the results can be helpful in the analysis of the dynamics of such models. 
Since studies of more general cosmological models containing imperfect matter are rare we shall give a detailed account on our approach to gauge fixing and how it relates to other works. 

In the orthonormal frame approach the gauge freedom is the 6 dimensional group of Lorentz transformations, which represents the freedom in choice of orthonormal frame. When employed to Bianchi models, that admit a three dimensional group $G_3$ of isometries, a set of preferred orthonormal frames exist in which the timelike frame vector is orthogonal to the tangent vectors of the group orbit. The remaining gauge freedom is an arbitrary (time-dependent) three-dimensional rotation of the spatial frame vectors. The model builder is explicitly equipped with this freedom via the one-form $\Omega_a$(t), that specifies the choice of local angular velocity of the frame vectors $\mathbf e_a$ relative to a Fermi-propagated spatial frame. By choosing three time-dependent functions $\Omega_a$ accompanied by an appropriate choice of initial orientation of the spatial frame vectors, variables can be removed from the dynamical system.  This procedure, that we employed in sections \ref{sub:geometry}-\ref{sub:gauge}, is what we refer to as 'gauge fixing' in the orthonormal frame approach. 

When restricting to Bianchi models of type I-VII$_h$ the $G_3$ admits a two-dimensional Abelian subgroup $G_2$. A convenient 1+1+2 decomposition of Einstein's field equations can then be carried out \cite{bok:EllisWainwright}. In \cite{Ben18} this decomposition was employed to write down the field equations for the model under consideration. In this paper we started with the same approach by choosing an orbit-aligned frame with spatial basis vectors $\mathbf e_2$ and $\mathbf e_3$ tangent to the orbits of the $G_2$ subgroup. This does not fix the frame uniquely, but allows a time dependent three-dimensional rotation of the spatial frame around $\mathbf e_1$. We then carried out a uniform approach to gauge fixing over the space of models. The main new technical result in section \ref{ch:model} is that the remaining gauge freedom can be employed to remove \emph{two} degrees of freedom from the dynamical system without loss of generality, that is without restricting the space of physical models. Specifically, taking into account constraint equations, the objects $[V_2, V_3]^T$ and $[\Sigma_2, \Sigma_3]^T$, which both transform as spin-1 objects under the remaining frame rotation, can be chosen to be parallel without loss of generality, as established by Lemma \ref{Lemma:gauge}. This allowed us to eliminate $V_2$ and $\Sigma_2$ simultaneously, using the remaining frame rotation around $\mathbf e_1$. The resulting expansion normalized dynamical system is given in section \ref{sub:evo}, and is the starting point for our investigation of cosmological models. Upon eliminating the perfect fluid ($\Omega$) using the Hamiltonian constraint (\ref{C6}), the system consists of 11 variables (subject to 5 non-linear constraints), where $\{\Theta, V_1, V_3\}$ describe the gradient $\nabla_\mu \varphi$, $\{\Sigma_+, \Sigma_-, \Sigma_\times, \Sigma_3 \}$ describe the rate of shear of the perfect fluid and $\{A, N_+, N_-, N_\times \}$ describe the spatial curvature. In section \ref{sub:cauchy} we carried out a powerful check of the equations by explicitly showing that the system posses a well-defined Cauchy problem.

In section \ref{ch:invariant} we wrote state space $D$ as a union of disjoint invariant sets, see equations (\ref{D:subspaces})-(\ref{BIS:VIIh}). Each set was identified with a specific cosmological model with main properties summarized in table \ref{tab:sets:C} and \ref{tab:sets:D}. A perfect branch was obtained, with sets denoted $\mathcal C^0(\#)$ and $\mathcal D^0(\#)$, that reproduced known results for perfect fluid models. Each perfect fluid model $\mathcal C^0(\#)$ have a ``simple'' extension into the imperfect sector denoted $\mathcal C^+(\#)$ in which the direction of $\nabla_i \varphi$ is non-rotating, ie. fixed relative to the spin axis of a gyroscope, and which is orthogonal to the orbits of the $G_2$ subgroup. In Bianchi types I, II, III and VI$_{-1/9}$ there are non-trivial extensions into the imperfect sector, denoted $\mathcal D^+(\#)$, in which $\nabla_i \varphi$ generally rotates in the sense that its direction is not fixed relative to the spin axis of a gyroscope. Unlike the perfect branch, that consists of orthogonal cosmological models, there is generally energy flux in the imperfect models $\mathcal C^+(\#)$ and $\mathcal D^+(\#)$. Note that $A/V_1$ is a constant of motion in all class B models with $V_3=0$. This can be used to simplify the analysis of $\mathcal C^+(\text{III})$, $\mathcal C^+(\text{IV})$, $\mathcal C^+(\text{V})$, $\mathcal C^+(\text{VI}_h)$, $\mathcal C^+(\text{VII}_h)$ and $\mathcal D^+(\text{VI}_{-1/9})$. In section \ref{ch:higher} we derived all subsets corresponding to higher-symmetry models, that is LRS and FLRW models, see table \ref{tab:sets:LRS} for summary of main properties. It was found that LRS models extend into the imperfect branch in all Bianchi types except II. All LRS models are orthogonal cosmological models except for $\mathcal S^+(\text{V})$ and $\mathcal S^+(\text{VII}_h)$ (which are physically equivalent) in which there is energy flux along the LRS axis. In section \ref{ch:ex-sym} we derived all models with a shear-free normal congruence, which was shown to be restricted to the LRS Bianchi type III model, $\mathcal S_\text{SF}^+(\text{III})\subset \mathcal S^+(\text{III})$, verifying previous results \cite{Mik18}. Furthermore, we established $\mathcal D^+(\text{III})$ as a neighbourhood of $\mathcal S_\text{SF}^+(\text{III})$, which means that points $X\in\mathcal S_\text{SF}^+(\text{III})$ can be perturbed to a non-LRS state vector. This will be used in a follow-up to investigate the stability of shear-free solutions with respect to arbitrary spatially homogeneous perturbations.

For each set we counted the degrees of freedom. An algorithm to obtain the dimension $d$ of the space of initial data for each set was presented in section \ref{sub:counting}. This is the number of independent parameters that must be specified to fix initial conditions and is a gauge independent measure of the generality of the corresponding cosmological model. Since we fixed the gauge from the start, no redundant (gauge) degrees of freedom are present ($r=0$) in the most general sets, which are $\mathcal D^+(\text{III})$ and $\mathcal D^+(\text{VI}_{-1/9})$, both of dimension $d=7$. But in the less general sets unphysical gauge degrees of freedom often remain, typically $r=1$. The reason is that the anisotropies used to choose a unique orthonormal frame are often absent in the less general models. This is the situation, for instance, in all sets of type $\mathcal C(\#)$, where $V_i$ and $A_i$ are parallel if both present. The implications for the dynamical system approach depends on the context. In the case that one wants to study the dynamics of a cosmological model that corresponds to a set with $r>0$, the gauge must be fixed further. If not the dynamical system would be of dimension $d+r$ and contain unphysical modes. The remaining gauge fixing is usually a quick job and we have given several examples in the text. In the case that one wants to study a more general cosmological model the situation may be more complicated. Note that the less general sets with $r>0$ often represent the boundary of the more general sets with $r=0$. We emphasize that it is not always possible to fix the gauge uniquely on the boundary without restricting the space of physical models in the interior. Informally, there is not enough physical modes left in the more special sets to cover the entire boundary of the more general set. This is a consequence of the multiple non-linear constraint equations (eqs. \ref{C1}-\ref{C5}) that gives state space $D$ a very complicated topology. 

This has profound consequences for the dynamical system approach since self-similar solutions on the boundary of state space often represent asymptotic states of the more general cosmological model \cite{bok:EllisWainwright}. Because of the redundant gauge degree of freedom, such self-similar solutions will have a non-trivial representation on the boundary of the more general cosmological model. Typically a single physical solution will be represented by a curve of physically equivalent points on the boundary of the set, despite the gauge being fixed uniquely in the interior. This has to be dealt with carefully in the stability analysis of self-similar solutions that lays on the boundary. The reason is that perturbations along the curve into the interior of state space, where the gauge is fixed uniquely, will represent different physical situations. Consequently, the stability will generally change along the curve of physically equivalent points. Giving concrete examples on how to deal with this situation is left for future work. In order to keep the results of this work directly applicable in a wide range of situations, we have formulated the defining conditions of each set general enough to account for all representations allowed by our choice of orthonormal frame.

\paragraph{Acknowledgments}
I would like to thank Sigbj\o rn Hervik, Ben David Normann and Angelo Ricciardone for many interesting conversations on the topic of this paper.

\appendix

\section{Expansion normalized variables \label{app:normalized}}
\begin{align}
N_{ab} &\equiv \frac{n_{ab}}{H} =
\left(\begin{array}{ccc}
0      & 0&0\\
0     & N_++\sqrt{3}N_-&\sqrt{3}N_\times\\
0& \sqrt{3}N_\times &N_+-\sqrt{3}N_-\\
\end{array}\right) , \label{matrix:N} \\
A_i &\equiv \frac{a_i}{H} = (A, 0, 0) \,, \\
\Sigma_{ab} &\equiv \frac{\sigma_{ab}}{H} =
\left(\begin{array}{ccc}
-2\Sigma_+      & \sqrt{3}\Sigma_2&\sqrt{3}\Sigma_3\\
\sqrt{3}\Sigma_2     & \Sigma_++\sqrt{3}\Sigma_-&\sqrt{3}\Sigma_\times\\
\sqrt{3}\Sigma_3     & \sqrt{3}\Sigma_\times&\Sigma_+-\sqrt{3}\Sigma_-\\
\end{array}\right),
\label{shear-tensor} \\
X_\mu &\equiv \frac{x_\mu}{\sqrt{6}H} = (\Theta, V_1, V_2, V_3) \,,
\label{gauge2}\\
\Omega &\equiv \frac{\rho}{3H^2} \,.
\end{align}
Using (\ref{eq:H-ev}) the deceleration parameter can be expressed in expansion normalized variables:
\be
q\equiv -1-\frac{\dot H}{H^2} = \left(\tfrac{3}{2}\gamma -1\right)\Omega + 2\Theta^2 + 2\left( \Sigma_+^2 +\Sigma_-^2 +\Sigma_\times^2 + \Sigma_2^2 + \Sigma_3^2 \right) \,.
\ee

\section{Transformation properties \label{app:transformations}}
Under a rotation of the spatial frame around the frame vector $\mathbf e_1$ the variables obey the following transformation properties in a matrix representation: 
\begin{itemize}
	\item spin-0 (scalars): $\Omega$, $\Theta$, $V_1$, $\Sigma_+$, $A$ and $N_+$\,. \newline 
	\item spin-1: $\begin{bmatrix}
	V_2 \\ V_3
	\end{bmatrix}$ and $\begin{bmatrix}
	\Sigma_2 \\ \Sigma_3
	\end{bmatrix}$.
	\item spin-2: $\begin{bmatrix}
	\Sigma_- \\ \Sigma_\times
	\end{bmatrix}$ and $\begin{bmatrix}
	N_- \\ N_\times
	\end{bmatrix}$.
\end{itemize}
To be concrete, let the frame rotation be right-handed:
\be
(\mathbf e_1\,,\; \mathbf e_2\,,\; \mathbf e_3) \rightarrow (\mathbf e_1\,,\; \cos \phi \; \mathbf e_2 + \sin \phi \; \mathbf e_3 \,,\; -\sin \phi \; \mathbf e_2 + \cos \phi \; \mathbf e_3) \,.  
\ee
The spin-0 objects (scalars) are then invariants, the spin-1 objects transform as
\be
\begin{bmatrix}
V_2 \\
V_3 
\end{bmatrix}
\rightarrow
\begin{bmatrix}
\cos \phi & \sin \phi \\
-\sin \phi & \cos \phi 
\end{bmatrix}
\begin{bmatrix}
V_2 \\
V_3 
\end{bmatrix}, \qquad
\begin{bmatrix}
\Sigma_2 \\
\Sigma_3 
\end{bmatrix}
\rightarrow
\begin{bmatrix}
\cos \phi & \sin \phi \\
-\sin \phi & \cos \phi 
\end{bmatrix}
\begin{bmatrix}
\Sigma_2 \\
\Sigma_3 
\end{bmatrix}, 
\ee
and the spin-2 objects as
\begin{align}
\begin{bmatrix}
\Sigma_- \\
\Sigma_\times 
\end{bmatrix}
\rightarrow
\begin{bmatrix}
\cos 2\phi & \sin 2\phi \\
-\sin 2\phi & \cos 2\phi 
\end{bmatrix}
\begin{bmatrix}
\Sigma_- \\
\Sigma_\times 
\end{bmatrix}, \qquad
\begin{bmatrix}
N_- \\
N_\times 
\end{bmatrix}
\rightarrow
\begin{bmatrix}
\cos 2\phi & \sin 2\phi \\
-\sin 2\phi & \cos 2\phi 
\end{bmatrix}
\begin{bmatrix}
N_- \\
N_\times 
\end{bmatrix}. 
\end{align}
Also note the following constructed scalars:  
\begin{align}
&(V_2^2+V_3^2)\,, \quad\, (\Sigma_2^2+\Sigma_3^2)\,,\;\; \quad (V_2 \Sigma_2 + V_3 \Sigma_3) \,,\;\;\;\;\, \quad (V_2 \Sigma_3 - V_3 \Sigma_2)\,, \\ 
&(\Sigma_-^2+\Sigma_\times^2) \,, \quad (N_-^2+N_\times^2)\,, \quad (\Sigma_- N_- + \Sigma_\times N_\times)\,, \quad (\Sigma_- N_\times - \Sigma_\times N_-)\,.
\end{align}

\section{Discrete transformations \label{app:discrete}} The gauge fixing conditions (\ref{gauge:R})-(\ref{gauge:S2}) leave a handful of discrete transformations that will now be considered. In section \ref{sub:discrete} this is used to identify discrete symmetries of the dynamical system. First we note that, generally, there is a total of eight different frames in which the covectors $A_i$ and $X_i$ are aligned with the basis vectors according to (\ref{gauge:A}) and (\ref{gauge:V}). These frames are related by the following transformations: 
\be
(\mathbf e_1, \mathbf e_2, \mathbf e_3) \quad \rightarrow \quad (\pm \mathbf e_1, \pm \mathbf e_2, \pm \mathbf e_3) \,.
\label{transformations:8}
\ee 
Among the eight bases, four are left-handed and four are right-handed. We shall now see that the choice (\ref{gauge:R}) of dimensionless frame rotation $R_i$ is preserved only by transformations  that preserve the handedness of the frame. Consider first a half rotation around $\mathbf e_1$
\be
(\mathbf e_1, \mathbf e_2, \mathbf e_3) \quad \rightarrow \quad ( \mathbf e_1, - \mathbf e_2, - \mathbf e_3) \,,
\label{rot:ex}
\ee
under which the frame rotation and shear variables transform as\footnote{The shear variables are off-diagonal components of the normalized shear-tensor, see (\ref{shear-tensor}).}  
\begin{align}
(R_1, R_2, R_3) \quad &\rightarrow \quad (R_1, -R_2, -R_3) \,, \\
(\Sigma_\times, \Sigma_3, \Sigma_2) \quad &\rightarrow \quad (\Sigma_\times, -\Sigma_3, -\Sigma_2) \,,
\end{align} 
which is consistent with the gauge choice (\ref{gauge:R}), and so are half-turns around $\mathbf e_2$ and $\mathbf e_3$ The associated discrete symmetries of the dynamical system will be worked out in section \ref{sub:discrete}. 

Next consider the parity flip
\be
(\mathbf e_1, \mathbf e_2, \mathbf e_3) \quad \rightarrow \quad ( -\mathbf e_1, \mathbf e_2, \mathbf e_3) \,,
\ee
under which the rotation and shear transform as  
\begin{align}
(R_1, R_2, R_3) \quad &\rightarrow \quad (-R_1, R_2, R_3) \,, \\
(\Sigma_\times, \Sigma_3, \Sigma_2) \quad &\rightarrow \quad (\Sigma_\times, -\Sigma_3, -\Sigma_2) \,.
\end{align} 
The gauge choice (\ref{gauge:R}) is clearly not preserved by this transformation. Neither is it by flipping the direction of $\mathbf e_2$ nor $\mathbf e_3$. We note that, generally, reversing the handedness of the frame requires a redefinition of the gauge choice (\ref{gauge:R}); $R_i \rightarrow -R_i$ on its left-hand side. However, in this paper the choice of frame rotation (\ref{gauge:R}) is implemented ``irreversibly'' in the equations of the dynamical system. Therefore, in section \ref{sub:discrete}, we will only consider discrete symmetries associated with transformations of the type (\ref{rot:ex}) that preserve the handedness of the frame.

\bibliographystyle{JHEP}
\bibliography{refs}

\providecommand{\href}[2]{#2}\begingroup\raggedright\begin{thebibliography}{10}

\bibitem{Ellis:1967}
G.~F.~R. Ellis, \emph{{Dynamics of pressure free matter in general
  relativity}}, \href{https://doi.org/10.1063/1.1705331}{\emph{J. Math. Phys.}
  {\bfseries 8} (1967) 1171--1194}.

\bibitem{main:Ellis1969}
G.~F.~R. Ellis and M.~A.~H. MacCallum, \emph{A class of homogeneous
  cosmological models}, {\emph{Comm. Math. Phys.} {\bfseries 12} (1969)
  108--141}.

\bibitem{MacCallum:1973}
M.~A.~H. MacCallum, \emph{{Cosmological models from a geometric point of
  view}}, {\emph{Cargese Lect. Phys.} {\bfseries 6} (1973) 61--174}.

\bibitem{bok:EllisWainwright}
J.~Wainwright and G.~F.~R. Ellis, \emph{{Dynamical Systems in Cosmology}}.
\newblock Cambridge University Press, 2005.

\bibitem{Hsu-Wainwright:1986}
L.~Hsu and J.~Wainwright, \emph{{Self similar spatially homogeneous
  cosmologies: Orthogonal perfect fluid and vacuum solutions}},
  \href{https://doi.org/10.1088/0264-9381/3/6/011}{\emph{Class. Quant. Grav.}
  {\bfseries 3} (1986) 1105--1124}.

\bibitem{Wainwright-Hsu:1989}
J.~Wainwright and L.~Hsu, \emph{{A dynamical systems approach to Bianchi
  cosmologies: Orthogonal models of class A}},
  \href{https://doi.org/10.1088/0264-9381/6/10/011}{\emph{Class. Quant. Grav.}
  {\bfseries 6} (1989) 1409--1431}.

\bibitem{Hewitt-Wainwright:1993}
C.~G. Hewitt and J.~Wainwright, \emph{{A Dynamical systems approach to Bianchi
  cosmologies: Orthogonal models of class B}},
  \href{https://doi.org/10.1088/0264-9381/10/1/012}{\emph{Class. Quant. Grav.}
  {\bfseries 10} (1993) 99--124}.

\bibitem{bok:EllisMaartensMacCallum}
G.~F.~R. Ellis, R.~Maartens and M.~A.~H. MacCallumm, \emph{{Relativistic
  Cosmology}}.
\newblock Cambridge University Press, 2012.

\bibitem{Calogero:2009}
S.~Calogero and J.~M. Heinzle, \emph{{Bianchi Cosmologies with Anisotropic
  Matter: Locally Rotationally Symmetric Models}},
  \href{https://doi.org/10.1016/j.physd.2010.11.015}{\emph{Physica} {\bfseries
  D240} (2011) 636--669}, [\href{https://arxiv.org/abs/0911.0667}{{\ttfamily
  0911.0667}}].

\bibitem{Hervik:2011}
S.~Hervik, D.~F. Mota and M.~Thorsrud, \emph{{Inflation with stable anisotropic
  hair: Is it cosmologically viable?}},
  \href{https://doi.org/10.1007/JHEP11(2011)146}{\emph{JHEP} {\bfseries 11}
  (2011) 146}, [\href{https://arxiv.org/abs/1109.3456}{{\ttfamily 1109.3456}}].

\bibitem{Copeland:2006}
E.~J. Copeland, M.~Sami and S.~Tsujikawa, \emph{{Dynamics of dark energy}},
  \href{https://doi.org/10.1142/S021827180600942X}{\emph{Int. J. Mod. Phys.}
  {\bfseries D15} (2006) 1753--1936},
  [\href{https://arxiv.org/abs/hep-th/0603057}{{\ttfamily hep-th/0603057}}].

\bibitem{bok:supergravity}
D.~Z. Freedman and A.~V. Proeyen, \emph{{Supergravity}}.
\newblock Cambridge University Press, 2013.

\bibitem{LeBlanc:1997}
V.~G. LeBlanc, \emph{{Asymptotic states of magnetic Bianchi I cosmologies}},
  \href{https://doi.org/10.1088/0264-9381/14/8/025}{\emph{Class. Quant. Grav.}
  {\bfseries 14} (1997) 2281--2301}.

\bibitem{Yamamoto:2011}
K.~Yamamoto, \emph{{Bianchi Class B Spacetimes with Electromagnetic Fields}},
  \href{https://doi.org/10.1103/PhysRevD.85.043510}{\emph{Phys. Rev.}
  {\bfseries D85} (2012) 043510},
  [\href{https://arxiv.org/abs/1108.5983}{{\ttfamily 1108.5983}}].

\bibitem{Ford:1989}
L.~H. Ford, \emph{{Inflation driven by a vector field}},
  \href{https://doi.org/10.1103/PhysRevD.40.967}{\emph{Phys. Rev.} {\bfseries
  D40} (1989) 967}.

\bibitem{Ackerman:2007}
L.~Ackerman, S.~M. Carroll and M.~B. Wise, \emph{{Imprints of a Primordial
  Preferred Direction on the Microwave Background}},
  \href{https://doi.org/10.1103/PhysRevD.75.083502,
  10.1103/PhysRevD.80.069901}{\emph{Phys. Rev.} {\bfseries D75} (2007) 083502},
  [\href{https://arxiv.org/abs/astro-ph/0701357}{{\ttfamily
  astro-ph/0701357}}].

\bibitem{Golovnev:2008}
A.~Golovnev, V.~Mukhanov and V.~Vanchurin, \emph{{Vector Inflation}},
  \href{https://doi.org/10.1088/1475-7516/2008/06/009}{\emph{JCAP} {\bfseries
  0806} (2008) 009}, [\href{https://arxiv.org/abs/0802.2068}{{\ttfamily
  0802.2068}}].

\bibitem{Watanabe:2009}
M.-a. Watanabe, S.~Kanno and J.~Soda, \emph{{Inflationary Universe with
  Anisotropic Hair}},
  \href{https://doi.org/10.1103/PhysRevLett.102.191302}{\emph{Phys. Rev. Lett.}
  {\bfseries 102} (2009) 191302},
  [\href{https://arxiv.org/abs/0902.2833}{{\ttfamily 0902.2833}}].

\bibitem{Maleknejad:2012}
A.~Maleknejad, M.~M. Sheikh-Jabbari and J.~Soda, \emph{{Gauge Fields and
  Inflation}}, \href{https://doi.org/10.1016/j.physrep.2013.03.003}{\emph{Phys.
  Rept.} {\bfseries 528} (2013) 161--261},
  [\href{https://arxiv.org/abs/1212.2921}{{\ttfamily 1212.2921}}].

\bibitem{Ito:2015}
A.~Ito and J.~Soda, \emph{{Designing Anisotropic Inflation with Form Fields}},
  \href{https://doi.org/10.1103/PhysRevD.92.123533}{\emph{Phys. Rev.}
  {\bfseries D92} (2015) 123533},
  [\href{https://arxiv.org/abs/1506.02450}{{\ttfamily 1506.02450}}].

\bibitem{Almeida:2019}
J.~P.~B. Almeida, A.~Guarnizo, R.~Kase, S.~Tsujikawa and C.~A.
  Valenzuela-Toledo, \emph{{Anisotropic inflation with coupled p-forms}},
  \href{https://doi.org/10.1088/1475-7516/2019/03/025}{\emph{JCAP} {\bfseries
  1903} (2019) 025}, [\href{https://arxiv.org/abs/1901.06097}{{\ttfamily
  1901.06097}}].

\bibitem{Cicciarella:2019}
F.~Cicciarella, J.~Mabillard, M.~Pieroni and A.~Ricciardone, \emph{{An
  Hamilton-Jacobi formulation of anisotropic inflation}},
  \href{https://doi.org/10.1088/1475-7516/2019/09/044}{\emph{JCAP} {\bfseries
  1909} (2019) 044}, [\href{https://arxiv.org/abs/1903.11154}{{\ttfamily
  1903.11154}}].

\bibitem{Bennett:2010}
C.~L. Bennett et~al., \emph{{Seven-Year Wilkinson Microwave Anisotropy Probe
  (WMAP) Observations: Are There Cosmic Microwave Background Anomalies?}},
  \href{https://doi.org/10.1088/0067-0049/192/2/17}{\emph{Astrophys. J. Suppl.}
  {\bfseries 192} (2011) 17},
  [\href{https://arxiv.org/abs/1001.4758}{{\ttfamily 1001.4758}}].

\bibitem{main:Carneiro01}
S.~Carneiro and G.~A.~M. Marug\'an, \emph{{Anisotropic cosmologies containing
  isotropic background radiation}},
  \href{https://doi.org/10.1103/PhysRevD.64.083502}{\emph{Phys. Rev.}
  {\bfseries D64} (2001) 083502},
  [\href{https://arxiv.org/abs/gr-qc/0109039}{{\ttfamily gr-qc/0109039}}].

\bibitem{main:Koivisto11}
T.~S. Koivisto, D.~F. Mota, M.~Quartin and T.~G. Zlosnik, \emph{{On the
  Possibility of Anisotropic Curvature in Cosmology}},
  \href{https://doi.org/10.1103/PhysRevD.83.023509}{\emph{Phys.Rev.} {\bfseries
  D83} (2011) 023509}, [\href{https://arxiv.org/abs/1006.3321}{{\ttfamily
  1006.3321}}].

\bibitem{Mik18}
M.~Thorsrud, \emph{{Balancing Anisotropic Curvature with Gauge Fields in a
  Class of Shear-Free Cosmological Models}},
  \href{https://doi.org/10.1088/1361-6382/aab65a}{\emph{Class. Quant. Grav.}
  {\bfseries 35} (2018) 095011},
  [\href{https://arxiv.org/abs/1712.02778}{{\ttfamily 1712.02778}}].

\bibitem{Ben18}
B.~D. Normann, S.~Hervik, A.~Ricciardone and M.~Thorsrud, \emph{{Bianchi
  cosmologies with $p$-form gauge fields}},
  \href{https://doi.org/10.1088/1361-6382/aab3a7}{\emph{Class. Quant. Grav.}
  {\bfseries 35} (2018) 095004},
  [\href{https://arxiv.org/abs/1712.08752}{{\ttfamily 1712.08752}}].

\bibitem{Coley-Hervik:2005}
A.~Coley and S.~Hervik, \emph{{A Dynamical systems approach to the tilted
  Bianchi models of solvable type}},
  \href{https://doi.org/10.1088/0264-9381/22/3/009}{\emph{Class. Quant. Grav.}
  {\bfseries 22} (2005) 579--606},
  [\href{https://arxiv.org/abs/gr-qc/0409100}{{\ttfamily gr-qc/0409100}}].

\bibitem{bok:MacCallum79}
M.~A.~H. MacCallumm, \emph{{Anisotropic and inhomogeneous relativistic
  cosmologies}}.
\newblock In Hawking, S.W. and Israel, W., editors, General relativity: an
  Einstein centenary survey. Cambridge University Press, 1979.

\bibitem{main:Mimoso93}
J.~P. Mimoso and P.~Crawford, \emph{{Shear - free anisotropic cosmological
  models}}, \href{https://doi.org/10.1088/0264-9381/10/2/013}{\emph{Class.
  Quant. Grav.} {\bfseries 10} (1993) 315--326}.

\bibitem{main:Coley94}
A.~A. Coley and D.~J. McManus, \emph{{On space-times admitting shear - free,
  irrotational, geodesic timelike congruences}},
  \href{https://doi.org/10.1088/0264-9381/11/5/013}{\emph{Class. Quant. Grav.}
  {\bfseries 11} (1994) 1261--1282},
  [\href{https://arxiv.org/abs/gr-qc/9405034}{{\ttfamily gr-qc/9405034}}].

\bibitem{main:Coley94nr2}
D.~J. McManus and A.~A. Coley, \emph{{Shear - free irrotational, geodesic,
  anisotropic fluid cosmologies}},
  \href{https://doi.org/10.1088/0264-9381/11/8/011}{\emph{Class. Quant. Grav.}
  {\bfseries 11} (1994) 2045--2058},
  [\href{https://arxiv.org/abs/gr-qc/9405035}{{\ttfamily gr-qc/9405035}}].

\bibitem{pert:Pereira12}
T.~S. Pereira, S.~Carneiro and G.~A.~M. Marug\'an, \emph{{Inflationary
  Perturbations in Anisotropic, Shear-Free Universes}},
  \href{https://doi.org/10.1088/1475-7516/2012/05/040}{\emph{JCAP} {\bfseries
  1205} (2012) 040}, [\href{https://arxiv.org/abs/1203.2072}{{\ttfamily
  1203.2072}}].

\bibitem{shear-free:Abebe16}
A.~Abebe, D.~Momeni and R.~Myrzakulov, \emph{{Shear-free anisotropic
  cosmological models in f(R) gravity}},
  \href{https://doi.org/10.1007/s10714-016-2046-1}{\emph{Gen. Rel. Grav.}
  {\bfseries 48} (2016) 49},
  [\href{https://arxiv.org/abs/1507.03265}{{\ttfamily 1507.03265}}].

\bibitem{pert:Pereira17}
F.~O. Franco and T.~S. Pereira, \emph{{Tensor Perturbations in Anisotropically
  Curved Cosmologies}},
  \href{https://doi.org/10.1088/1475-7516/2017/11/022}{\emph{JCAP} {\bfseries
  1711} (2017) 022}, [\href{https://arxiv.org/abs/1709.00007}{{\ttfamily
  1709.00007}}].

\bibitem{Pailas:2019}
T.~Pailas and T.~Christodoulakis, \emph{{Dynamically equivalent $\Lambda$CDM
  equations with underlying Bianchi Type geometry}},
  \href{https://doi.org/10.1088/1475-7516/2019/07/029}{\emph{JCAP} {\bfseries
  1907} (2019) 029}, [\href{https://arxiv.org/abs/1903.07473}{{\ttfamily
  1903.07473}}].

\end{thebibliography}\endgroup

\end{document}